\documentclass[11pt,reqno]{amsart}

\usepackage[T1]{fontenc}
\usepackage{lmodern}
\usepackage{microtype}
\usepackage{amsmath,amssymb,amsthm,mathtools}
\usepackage{mathrsfs}
\usepackage{enumitem}
\usepackage{comment}
\usepackage[colorlinks=true,linkcolor=blue,citecolor=blue,urlcolor=blue]{hyperref}

\usepackage{mathtools}
\mathtoolsset{showonlyrefs}

\hypersetup{
  pdftitle={Electrovacuum black hole uniqueness via singular harmonic maps},
  pdfauthor={Qing Han, Marcus Khuri, Gilbert Weinstein, and Jingang Xiong},
  pdfsubject={Electrovacuum black hole uniqueness and logarithmic angle defects for singular harmonic maps},
  pdfkeywords={electrovacuum black hole uniqueness, Kerr--Newman, Majumdar--Papapetrou, singular harmonic maps, horizon rods, logarithmic angle defects, maximum principle}
}

\allowdisplaybreaks
\numberwithin{equation}{section}

\newtheorem{theorem}{Theorem}[section]
\newtheorem{proposition}[theorem]{Proposition}
\newtheorem{lemma}[theorem]{Lemma}
\newtheorem{corollary}[theorem]{Corollary}
\newtheorem{remark}[theorem]{Remark}
\newtheorem{definition}[theorem]{Definition}

\newcommand{\R}{\mathbb R}
\newcommand{\C}{\mathbb C}
\newcommand{\CH}{\mathbb H_{\C}^{2}}

\newcommand{\Th}{\Theta}
\newcommand{\grad}{\nabla}
\newcommand{\Dtwo}{\Delta_{\rho,z}}

\newcommand{\cE}{\mathcal E}
\newcommand{\cT}{\mathcal T}
\newcommand{\cA}{\mathcal A}
\newcommand{\starTwo}{\star_{2}}

\title[Electrovacuum Black Hole Uniqueness]{Electrovacuum Black Hole Uniqueness}

\author[Q. Han]{Qing Han}
\address{Department of Mathematics, University of Notre Dame, Notre Dame, IN 46556, USA}
\email{qhan@nd.edu}

\author[M. Khuri]{Marcus Khuri}
\address{Department of Mathematics, Stony Brook University, Stony Brook, NY 11794, USA}
\email{marcus.khuri@stonybrook.edu}

\author[G. Weinstein]{Gilbert Weinstein}
\address{Department of Mathematics and Department of Physics, Ariel University, Ariel, 40700, Israel}
\email{gilbertw@ariel.ac.il}

\author[J. Xiong]{Jingang Xiong}
\address{School of Mathematical Sciences, Laboratory of Mathematics and Complex Systems, MOE, Beijing Normal University, Beijing 100875, China}
\email{jx@bnu.edu.cn}

\thanks{Q. Han acknowledges the support of NSF Grant DMS-2305038. M. Khuri acknowledges the support of NSF Grant DMS-2405045. J. Xiong acknowledges the partial support of NSFC Grants 12325104. G. Weinstein acknowledges the support of ISF Grant 1119/26.}

\begin{document}

\begin{abstract}
We prove the black hole uniqueness conjecture in the axially symmetric, stationary, electrovacuum setting,
subject to the refined asymptotic analysis of the associated singular harmonic maps, which includes an analyticity hypothesis at the axes. More precisely, it is shown
that any asymptotically flat solution of the Einstein--Maxwell equations in this class, with more than one black hole
horizon component is either: Majumdar--Papapetrou, up to a duality rotation, in which case all logarithmic angle
defects vanish, or every finite axis rod logarithmic angle defect is strictly negative and hence every
interaction force is strictly attractive.
The proof extends the singular harmonic map method used for vacuum Kerr
uniqueness in \cite{HKWXvacuum}.  
\end{abstract}

\maketitle

\section{Introduction}
\label{sec1}

A fundamental question of general relativity asks to what extent a stationary black hole
is determined by its conserved charges.  For the
four-dimensional Einstein--Maxwell equations, the expected connected-horizon
answer is the Kerr--Newman family.  The classical proofs are based on the
Ernst formulation \cite{Ernst1968}, divergence identities, and the harmonic map
structure of the reduced field equations; see Carter
\cite{Carter1971,Carter1973,Carter1985},
Robinson \cite{Robinson1975}, and Mazur \cite{Mazur1982}.
More recently in \cite{Costa2010}, Costa
classified stationary, $I^+$-regular, analytic, nondegenerate electrovacuum
black holes in terms of Weinstein solutions and obtained Kerr--Newman when the
horizon is connected, see also the discussion in \cite[Section 3]{ChruscielCostaHeusler2012}.  The connected rotating degenerate case
was treated by Amsel--Horowitz--Marolf--Roberts
\cite{AmselEtAl2010} and by Chru\'sciel--Nguyen
\cite{ChruscielNguyen2010}. Moreover, a constructive axis-potential proof, including the
extreme case, was given by Meinel \cite{Meinel2012}.

Disconnected horizons introduce different phenomena.  Recall that in vacuum,
there are no regular static multi-black hole solutions
\cite{BuntingMasood1987,ChruscielReallTodStatic2006}, and the conical
singularities in the Bach--Weyl and Israel--Khan metrics
\cite{BachWeyl1922,IsraelKhan1964} represent the struts needed to prevent
collapse. On the other hand, electromagnetic repulsion creates the possibility of
regular balanced solutions. Indeed, Majumdar \cite{Majumdar1947} and Papapetrou
\cite{Papapetrou1947} found the explicit static solutions
\begin{equation}\label{mp000}
 \mathbf{g}_{mp}=-\mathcal{U}_{mp}^{-2}dt^2+\mathcal{U}_{mp}^2\,d\mathbf{x}^2,\qquad
 \mathcal{U}_{mp}=1+\sum_{i=1}^{N}\frac{m_i}{|\mathbf{x}-p_i|},\qquad m_i>0,
\end{equation}
on $\mathbb{R}\times(\mathbb{R}^3 \setminus \cup_{i=1}^{N} \{p_i\})$,
and Hartle--Hawking \cite{HartleHawking1972} showed that the apparent point
singularities are in fact degenerate horizons. Here, $m_i$ represents the mass and charge of each black hole.  
Chru\'sciel--Nadirashvili \cite{ChruscielNadirashvili1995} proved that a static, asymptotically flat electrovacuum black hole locally of Majumdar--Papapetrou type, with nonsingular domain of outer communications, is necessarily a standard Majumdar--Papapetrou solution. Furthermore, the static
electrovacuum classification of Chru\'sciel--Tod \cite{ChruscielTod2007} identifies the alternatives of
Reissner--Nordstr\"om and Majumdar--Papapetrou under certain global hypotheses. We mention also the analysis of Israel--Wilson--Perj\'es black holes in \cite{ChruscielReallTod2006}, the Majumdar--Papapetrou classification of Lucietti \cite{Lucietti}, and the recent related work of Hirsch--Zhang \cite{HirschZhang}.

The remaining balance problem is to understand rotating and charged
configurations with several horizon components.  In Weyl--Papapetrou
coordinates, the stationary axisymmetric Einstein--Maxwell equations reduce
to a singular harmonic map into complex hyperbolic plane.  Weinstein used this
structure to construct all possible configurations of nondegenerate horizons
\cite{Weinstein1995,Weinstein1996}, while Khuri--Weinstein \cite{KhuriWeinstein2016} treated the case of degenerate horizons. 
The reconstructed metrics are smooth away
from possible conical singularities on finite axis rods.  Regularity of the reduced harmonic maps at axis interiors was established by Nguyen \cite{Nguyen2011}. An alternative approach, via inverse scattering techniques, gives a complementary finite dimensional
description of the charged balance conditions, see Hennig
\cite{Hennig2019,Hennig2020,Hennig2026}.
Nevertheless, the question of determining which
charged multi-black hole configurations can exist in equilibrium without struts, has stayed largely open until now.

In the vacuum setting, the authors recently proved that every finite axis rod
has negative logarithmic angle defect, and hence positive attractive force,
for arbitrary finite mixed configurations of nondegenerate and rotating
degenerate horizons \cite{HKWXvacuum}.  The proof combines refined puncture
asymptotics from \cite{HKWXasymptotics} with a global differential inequality
for the Weyl conformal factor, and a local analysis at special points along axis
rods.  The purpose of the present work is to generalize this result to the electrovacuum
setting, and identify the Majumdar--Papapetrou family of solutions as the only balanced configurations.

We now describe the setting of the main theorem. Let $\Gamma=\{\rho=0\}\subset\R^3$
denote the rotation $z$-axis in cylindrical coordinates
$(\rho,z,\phi)$, with $\phi$ of period $2\pi$.  On the domain of outer
communications away from the orbit space boundary, the spacetimes under consideration have
topology $\mathbb{R}\times (\mathbb{R}^3 \setminus\Gamma)$, and their axisymmetric stationary
electrovacuum metrics take the Weyl--Papapetrou form
\begin{equation}\label{eq:WP}
 \mathbf g=-e^{2U}dt^2+\rho^2e^{-2U}(d\phi+w\,dt)^2
 +e^{-2U+2\alpha}(d\rho^2+dz^2).
\end{equation}
Here $\partial_t$ and $\partial_\phi$ are respectively the
stationary and rotational Killing fields. 
Setting $u=U-\ln\rho$, Einstein--Maxwell equations reduce to solving for an axisymmetric harmonic map
\begin{equation}\label{lo}
 \Phi=(u,v,\chi,\psi):\mathbb R^3\setminus\Gamma\longrightarrow\mathbb H^2_{\mathbb C},
\end{equation}
where the target complex hyperbolic plane has metric
\begin{equation}\label{eq:CHmetric}
 h_{\mathbb H^2_{\mathbb C}}
 =du^2+e^{4u}(dv-\psi\,d\chi+\chi\,d\psi)^2
   +e^{2u}(d\chi^2+d\psi^2).
\end{equation}
The functions $v,\chi,\psi$ represent the charged twist and electromagnetic
potentials.  The full equations, and reconstruction quadratures which derive $\alpha$ and $w$ from the harmonic map, are recorded in Section~\ref{sec:singularmap}.

The boundary $\{\rho=0\}$ of the orbit half-plane decomposes into
alternating horizon and axis intervals.  Write
\[
 \mathcal H_i=[z_i^-,z_i^+],\qquad
 z_i^-\leq z_i^+<z_{i+1}^-,\qquad 1\leq i\leq N,
\]
for the horizon components.  If $z_i^-<z_i^+$, the interior of
$\mathcal H_i$ is a nondegenerate horizon rod, and the endpoints $(0,0, z_i^-)$ and $(0,0,z_i^+)$ are the
lower and upper poles of the horizon.  If
$z_i^-=z_i^+=z_i$, the horizon is degenerate and is represented in the
orbit space by the puncture $p_i=(0,0,z_i)\in\R^3$.
The complementary axis rods are
\[
 \Gamma_1=(-\infty,z_1^-),\qquad
 \Gamma_{j+1}=(z_j^+,z_{j+1}^-),\qquad
 \Gamma_{N+1}=(z_N^+,\infty),\qquad 1\leq j\leq N-1.
\]
On each $\Gamma_j$, the three potentials have constant traces.  With a fixed
orientation and electromagnetic gauge, their jumps encode the individual
angular momenta and charges of the horizons $\mathcal{H}_i$, by
\begin{equation}
 \mathcal{J}_i=\frac14(v_{i+1}-v_i),\qquad
 Q_i^e=\frac12(\chi_{i+1}-\chi_i),\qquad
 Q_i^b=\frac12(\psi_{i+1}-\psi_i),\qquad 1\leq i\leq N,
\end{equation}
as in \cite{KhuriWeinstein2016}. The total charge will be denoted by $Q_i=\sqrt{(Q_i^e)^2+(Q_i^b)^2}$.

The \textit{logarithmic angle defects} $\mathbf{b}_j$ also arise directly from the
reconstructed harmonic map data. More precisely, standard axis regularity implies that $\alpha$ has a constant
trace on each axis rod.  For $z_0\in\Gamma_j$, the
appropriate radius/circumference ratio computed from
\eqref{eq:WP} yields
\begin{equation*}
 e^{\mathbf b_j}
 :=\lim_{\rho\rightarrow 0}
 \frac{\displaystyle\int_0^\rho
 e^{\alpha(s,z_0)-U(s,z_0)}\,d s}
 {\rho e^{-U(\rho,z_0)}}
 =e^{\alpha(0,z_0)} \qquad\Rightarrow\quad\quad
 \mathbf b_j=\alpha|_{\Gamma_j}.
\end{equation*}
The corresponding interaction force \cite{Weinstein1990} may then be expressed as
\begin{equation*}
 \mathcal F_j=\frac14\left(e^{-\mathbf b_j}-1\right),
\end{equation*}
so that $\mathbf b_j<0$ is equivalent to a strictly positive, or attractive, force on the axis rod. In the asymptotically flat setting, $\alpha\to0$ at infinity, and hence the two
semi-infinite rods satisfy
\begin{equation}\label{eq:outerDefects}
 \mathbf b_1=\mathbf b_{N+1}=0.
\end{equation}

Section~\ref{sec:singularmap} gives the definition of a \textit{stationary
electrovacuum singular harmonic map} with black hole rod data, which is used in the statement of our main theorem. This notion records the standard horizon, axis, and infinity asymptotics of the harmonic maps that arise from asymptotically flat,
axisymmetric, stationary electrovacuum spacetimes.

\begin{theorem}\label{thm:main}
Let $N>1$, and let
\[
 \Phi=(u,v,\chi,\psi):\mathbb R^3\setminus\Gamma\longrightarrow
 \mathbb H^2_{\mathbb C}
\]
be a stationary electrovacuum singular harmonic map with $N$ black hole
horizon components.  Assume that
\[
 \mathcal{J}_i \neq 0,\quad\quad\text{ or }\qquad Q_i \neq 0,
\]
for every degenerate horizon component; no such hypothesis is imposed on
nondegenerate horizon rods. 
Then exactly one of the following alternatives holds.
\begin{enumerate}[label=\textup{(\roman*)},leftmargin=2.5em]
\item\label{alt:attractive}
Every finite axis rod has negative logarithmic angle defect,
\begin{equation}
 \mathbf{b}_j<0,\qquad 2\leq j\leq N.
\end{equation}

\item\label{alt:MP}
Every horizon component is degenerate, all axis defects vanish, and $\Phi$ is the harmonic map
of a Majumdar--Papapetrou spacetime, up to a Heisenberg translation in the target space and a constant duality rotation of the Maxwell field.
\end{enumerate}
\end{theorem}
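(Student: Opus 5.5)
The proof follows the vacuum strategy of \cite{HKWXvacuum}: the Weyl conformal factor $\alpha$ satisfies a global differential inequality, and this is combined with a maximum principle on the orbit half-plane. The reconstruction quadrature from Section~\ref{sec:singularmap} expresses $\partial_\rho\alpha$ and $\partial_z\alpha$ in terms of the harmonic map energy density, with the charged terms $e^{2u}|\nabla\chi|^2+e^{2u}|\nabla\psi|^2$ added to the vacuum expression.

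The first step is to find a function $\Theta$ built from $\alpha$ and $\Phi$, for instance $\Theta=\alpha-\tfrac12\ln(\text{a suitable combination of }\rho^2e^{2u}\text{ and the potentials})$, with two properties:
\begin{itemize}[leftmargin=2em]
\item $\Theta$ satisfies $\Delta\Theta\ge 0$, or a linear elliptic inequality $L\Theta\ge0$ without zeroth-order term, on $\mathbb R^3\setminus\Gamma$;
\item the trace of $\Theta$ on each axis rod equals $\mathbf b_j$.
\end{itemize}
I expect the inequality to come from the Gauss equation of the target, using that $\mathbb H^2_{\mathbb C}$ has sectional curvature pinched in $[-4,-1]$, together with a Bochner-type identity for the harmonic map. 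Equality should force the image of $\Phi$ to lie in a totally geodesic, flat-direction-free slice. In vacuum that slice is a geodesic; here the natural candidate is the real slice $v=0$, $\psi/\chi$ constant, $\chi=\pm\sqrt2 e^{-u}+c$. This is exactly the Majumdar--Papapetrou relation $e^{U}=\mathcal U_{mp}^{-1}$, with electric potential $\pm\mathcal U_{mp}^{-1}$.

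The second step is boundary analysis. At infinity $\Theta\to0$ by \eqref{eq:outerDefects}. Near nondegenerate horizon rods and poles I would use the refined puncture and pole asymptotics of \cite{HKWXasymptotics}, extended to the charged target. Near degenerate punctures with $\mathcal J_i\neq0$ or $Q_i\neq0$, the local model is extreme Kerr--Newman. I would show that $\Theta$ either has a boundary behaviour that excludes an interior maximum, or satisfies a Hopf-type sign at horizon rods; in the latter case horizon rods are treated as Neumann-type boundary where the normal derivative of $\alpha$ is controlled by the quadrature.

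The third step applies the strong maximum principle. Since $\Theta\le0$ at infinity and $\Theta$ cannot attain a positive interior or horizon maximum, $\Theta\le0$ everywhere, and hence $\mathbf b_j\le0$. Strictness at an axis rod uses the analyticity hypothesis at the axes. If $\mathbf b_j=0$ for some finite rod, a local analysis at special points of the rod (a Hopf lemma on the axis after lifting to $\mathbb R^3$) forces $\Theta\equiv0$. Then the equality case of the curvature inequality holds identically, so $\Phi$ maps into the totally geodesic slice. There the map reduces to a single axisymmetric harmonic function $\mathcal U$ on $\mathbb R^3$ minus the horizons.

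The rod data then force every horizon to be a point, since a nondegenerate rod is incompatible with the slice relation at the surface gravity. Hence $\mathcal U=\mathcal U_{mp}$ with positive masses. Heisenberg translations and duality rotations are used to normalise the constants, which gives alternative (ii). The two alternatives are exclusive because (ii) has all $\mathbf b_j=0$.

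The main obstacle is the first step: finding the right $\Theta$ whose Laplacian has a sign in the charged case. Unlike vacuum, the target is not a symmetric space of rank one with constant curvature. I would first try a Kähler-potential style quantity, $\alpha$ minus $\tfrac12\ln$ of the Busemann-type function $e^{-2u}\bigl((1+\tfrac{\chi^2+\psi^2}{2}e^{2u})^2+e^{4u}v^2\bigr)$ adapted to the extreme Reissner--Nordstr\"om/Majumdar--Papapetrou geodesic direction. I would check by direct computation that its Laplacian is a sum of squares that vanishes exactly on the Majumdar--Papapetrou slice.
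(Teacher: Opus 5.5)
There is a genuine gap, and it is the step you flag yourself: Step~1 is never carried out, and everything after it depends on it. You do not construct any quantity that is a linear subsolution regular across the axis and has trace $\mathbf{b}_j$ on every rod. The candidate you suggest cannot have that trace. With $Z=(\chi,\psi)$, your expression equals $e^{-2u}+|Z|^2+\tfrac14|Z|^4e^{2u}+e^{2u}v^2$. Since $e^{2u}=\rho^{-2}e^{2U}\to\infty$ at the axis, this tends to $+\infty$ on any rod where $(v,\chi,\psi)\neq0$. It tends to $0$ on a rod where they vanish. So its logarithm diverges on every rod, and a Heisenberg translation normalizes only one rod at a time.

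The missing idea is that no auxiliary function is needed, and no curvature or Bochner argument either. The Weyl factor $\alpha$ itself satisfies $\Delta_{\rho,z}\alpha+\rho^{-1}|\nabla\alpha|\geq 2e^{4u}|\Theta|^2$. This follows from two facts:
\begin{enumerate}[label=\textup{(\arabic*)}]
\item Differentiating the quadratures and using the harmonic map equation gives $\Delta_{\rho,z}\alpha=-|\nabla U|^2+3\mathcal T+\mathcal E$.
\item The quadratures say exactly that $(\alpha_\rho+i\alpha_z)/\rho=\zeta_U^2+\zeta_0^2+\zeta_1^2+\zeta_2^2$, so the triangle inequality gives $\rho^{-1}|\nabla\alpha|\geq|\nabla U|^2-\mathcal T-\mathcal E$.
\end{enumerate}
Separately, your slice relation $\chi=\pm\sqrt2e^{-u}+c$ confuses reductions. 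Here $\chi,\psi$ are $\partial_\phi$-potentials, and for Majumdar--Papapetrou $d\chi=\cos\gamma\,\rho\star_2 d\mathcal U$, which is not a function of $u$.

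Your Step~3 also fails, because the drift $|\nabla\alpha|/\rho$ is singular at $\Gamma$, so the axis is a genuine boundary. The function $a=-\rho^2$ satisfies $a_{\rho\rho}+a_{zz}+|\nabla a|/\rho=0$, $a\leq0$ and $a(0,z)=0$, yet it is not constant. So there is no strong maximum principle across the axis, and no useful Hopf lemma there; indeed $a_\rho(0,z)=0$ holds automatically by evenness.

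The maximum principle on exhaustion domains therefore gives only $\alpha\leq M=\max_j\mathbf{b}_j$, using the pole and puncture asymptotics and the ODE bound on $\bar\alpha_i$. It does not give $\mathbf{b}_j\leq0$. The paper gets the sign as follows:
\begin{enumerate}[label=\textup{(\alph*)}]
\item If no bounded rod attains $M$, then $M=0$ and alternative (i) follows.
\item Otherwise take a bounded rod $\Gamma_j$ with $\mathbf{b}_j=M$. Since $U|_{\Gamma_j}\to-\infty$ at both ends, it has an interior maximum.
\item At that maximum, analyticity gives $\alpha_{\rho\rho}=-f'^2+4e^{2f}(h^2+k^2)=4e^{2f}(h^2+k^2)\geq0$. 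But $\alpha\leq M$ forces $\alpha_{\rho\rho}\leq0$, so $\alpha_{\rho\rho}=0$ there.
\item Only with this second-order vanishing does the analytic boundary rigidity lemma apply. It uses the leading homogeneous term and a comparison with $-\tanh t$, and yields $\alpha\equiv M=0$.
\end{enumerate}
Your final step is close to the paper's. Harmonicity of $\mathcal U=e^{-U}$ excludes nondegenerate rods, since $\mathcal U\sim e^{-u_0}/\rho$ there, and then B\^ocher and Liouville give $\mathcal U_{mp}$. But that step is reached only through $\alpha\equiv0$, which your plan never produces.
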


\begin{remark}
The conditions that characterize stationary electrovacuum singular harmonic maps, in Definition \ref{def:blackHoleData}, including analyticity at the axes, arise as consequences of substantially weaker assumptions \cite{HKWXprog}, see the discussion in Section \ref{sec2.2}.
\end{remark}

The theorem is formulated entirely at the level of singular harmonic
maps.  Its standard spacetime consequence may be briefly summarized as follows.
Suppose that an analytic, asymptotically flat, stationary electrovacuum
black hole spacetime satisfies the usual $I^+$-regularity hypotheses.  If at least one component of the event horizon is
rotating, the rigidity theorem \cite{ChruscielRigidity}, yields a global axial Killing field,
independently of horizon connectedness or degeneracy. The electrovacuum
circularity theorem \cite{ChruscielCostaHeusler2012} then gives the Weyl--Papapetrou reduction, and corresponding
stationary electrovacuum singular harmonic map.  If the horizon were disconnected,
regularity of the finite axis rods would rule out
alternative~\ref{alt:attractive} of Theorem~\ref{thm:main}, while the
remaining alternative would imply that the spacetime is
Majumdar--Papapetrou and hence static, contradicting the presence of a
rotating horizon component.  The horizon is therefore connected, and the
known connected-horizon classifications imply that the domain of
outer communications is Kerr--Newman: Costa \cite{Costa2010} in the nondegenerate
case, and Chru\'sciel--Nguyen \cite{ChruscielNguyen2010} or
Amsel--Horowitz--Marolf--Roberts \cite{AmselEtAl2010} in the degenerate case.

If, on the other hand, every horizon component is nonrotating, axial
symmetry does not follow merely from analyticity, as the Majumdar--Papapetrou family demonstrates.  
However, if the domain of outer communications admits a smooth maximal Cauchy surface,
then the electrovacuum staticity theorem \cite{SudarskyWald} implies that the solution is
static, and the classification of Chru\'sciel--Tod
\cite{ChruscielTod2007} gives Reissner--Nordstr\"om when the horizon is
connected and Majumdar--Papapetrou when it is disconnected.  Thus, under
the rigidity theorem hypotheses in the rotating case and the 
staticity theorem hypotheses in the nonrotating case, the
Kerr--Newman/Majumdar--Papapetrou dichotomy follows. 
In this overall picture, Theorem~\ref{thm:main} provides the key step in the rotating case by excluding disconnected 
event horizons.

We now outline the proof for the main theorem. A central observation is a new nonlinear differential inequality for the
Weyl conformal factor $\alpha$. More precisely, by differentiating the Weyl quadratures equations that determine $\alpha$, and exploiting the complex-square structure of the harmonic map stress-energy tensor we arrive at
\begin{equation}\label{eq:introAlphaIneq}
 \Delta_{\rho,z}\alpha+\rho^{-1}|\nabla\alpha|
 \geq 2e^{4u}|dv-\psi\,d\chi+\chi\,d\psi|^2\geq0.
\end{equation}
By invoking the maximum principle on a sequence of exhaustion domains that are separated from the axis in the orbit space half-plane, and using the asymptotics at nondegenerate horizon poles and degenerate horizon punctures, we obtain the global upper bound
\begin{equation}\label{ur}
 \alpha\leq\max_{1\leq j\leq N+1}\mathbf{b}_j .
\end{equation}
This conclusion is analogous to the first goal in the proof for the vacuum setting \cite{HKWXvacuum}. At this point, the strategies for the vacuum and electrovacuum cases diverge, since here we intend to take advantage of the analyticity condition
of the reduced harmonic map along axis rods, while in \cite{HKWXvacuum} such analyticity is not assumed.
Let $\Gamma_j$ be a finite axis rod which realizes the largest logarithmic angle defect. Near this rod analyticity 
yields the factorizations
\[
 U=F(\rho^2,z),\qquad v=\rho^4G(\rho^2,z), \qquad \chi=\rho^2H(\rho^2,z),\qquad
 \psi=\rho^2K(\rho^2,z),
\]
and at an interior maximum for $U|_{\Gamma_j}$ we show that
\[
 \alpha_{\rho\rho}(0,z)
 =4e^{2F(0,z)}\bigl(H(0,z)^2+K(0,z)^2\bigr).
\]
The global upper bound forces this expression to vanish at the maximum. This, combined with the differential
inequality \eqref{eq:introAlphaIneq}, and further asymptotic analysis, gives $\alpha\equiv0$. 
The conclusions of Theorem \ref{thm:main} then follow in a relatively straightforward manner.

The paper is organized as follows.  Section~\ref{sec:singularmap} records the
complex-hyperbolic reduction and the harmonic map asymptotic class.  Section \ref{sec3}
establishes the Weyl conformal factor differential inequality \eqref{eq:introAlphaIneq}, and
the global upper bound \eqref{ur}.
The axis factorization of $\alpha$, and a related quadratic boundary rigidity theorem, are
provided in Section \ref{sec4}. 
The proof of the main theorem is given in Section \ref{sec:equality}.

\subsection*{Acknowledgements}
This project was made possible in part by a SQuaRE at the American Institute of Mathematics. The authors thank AIM for providing a supportive and mathematically rich environment. 

\section{The Electrovacuum Singular Harmonic Map}\label{sec:singularmap}

\subsection{The complex hyperbolic reduction}
By setting $u=U-\ln\rho$, the Einstein--Maxwell equations for the Weyl-Papapetrou metric \eqref{eq:WP}
reduce (\cite{Weinstein1996}) to an axisymmetric harmonic map system for $\Phi$. If
\begin{equation}
 \Th=d v-\psi\,d\chi+\chi\,d\psi,
\end{equation}
then, with all norms and differential operators taken with respect to the Euclidean metric on
$\R^3\setminus\Gamma$, the harmonic map equations may be expressed as
\begin{align}
\begin{split}
 \Delta u&=2e^{4u}|\Th|^2+e^{2u}
   \bigl(|\grad\chi|^2+|\grad\psi|^2\bigr),
   \label{eq:harmu}\\
 \operatorname{div}(e^{4u}\Th)&=0,
   \\
 \operatorname{div}(e^{2u}\grad\chi)
   &-2e^{4u}\Th\cdot\grad\psi=0,\\
 \operatorname{div}(e^{2u}\grad\psi)
   &+2e^{4u}\Th\cdot\grad\chi=0.
\end{split}
\end{align}
The remaining metric functions $w$ and $\alpha$ are recovered by quadrature, namely
\begin{align}
 w_\rho&=2\rho e^{4u}\Th_z,\qquad\quad\quad
 w_z=-2\rho e^{4u}\Th_\rho,
 \label{eq:wquad}\\
 \alpha_\rho
 &=\rho\left[U_\rho^2-U_z^2
 +e^{4u}(\Th_\rho^2-\Th_z^2)
 +e^{2u}(\chi_\rho^2-\chi_z^2+\psi_\rho^2-\psi_z^2)\right],
 \label{eq:alphaquad1}\\
 \alpha_z
 &=2\rho\left[U_\rho U_z+e^{4u}\Th_\rho\Th_z
 +e^{2u}(\chi_\rho\chi_z+\psi_\rho\psi_z)\right].
 \label{eq:alphaquad2}
\end{align}
Moreover, the spacetime field strength tensor for the electromagnetic field is related to
the electromagnetic potentials by
\begin{equation}\label{axialcon}
d\chi=\iota_{\partial_{\phi}} {*}_{\mathbf g}\mathbf F,\qquad
 d\psi=\iota_{\partial_{\phi}}\mathbf F,
\end{equation}
and given explicitly as
\begin{equation}\label{recem}
 \mathbf F
 =(d\phi+w\,d t)\wedge d\psi
 -{*}_{\mathbf g}
   \bigl((d\phi+w\,d t)\wedge d\chi\bigr),
\end{equation}
where ${*}_{\mathbf g}$ denotes the spacetime Hodge star and $\iota_{\partial_{\phi}}$ is interior product.

The integrability conditions for \eqref{eq:wquad}--\eqref{eq:alphaquad2} follow from
\eqref{eq:harmu}.  
Furthermore, since the target space metric and the harmonic map equations are real analytic, 
elliptic regularity shows that $\Phi$ is real analytic on $\{\rho>0\}$; the quadratures then imply that
$w$ and $\alpha$ are real analytic there as well. We also note that the complex hyperbolic target space $\mathbb{H}_{\mathbb{C}}^2$ is invariant under the Heisenberg translation isometries
\begin{equation}\label{eq:Heisenberg}
 v\longmapsto v+a_3\chi-a_2\psi+a_1,
 \qquad
 \chi\longmapsto\chi+a_2,
 \qquad
 \psi\longmapsto\psi+a_3,
\end{equation}
where $a_1,a_2,a_3$ are constants. Therefore, on any fixed axis rod, where $v,\chi,\psi$ have
constant traces, one may choose the constants in \eqref{eq:Heisenberg} so
that all three traces vanish, while the 1-form $\Th$ is unchanged.

\subsection{Regularity and asymptotics}\label{sec2.2}
We now state the local and asymptotic hypotheses for the harmonic maps appearing in the main theorem. The conditions below are modeled on the vacuum black hole class of maps in \cite{HKWXvacuum} and arise from: the axis and pole estimates of Nguyen \cite{Nguyen2011}, Li--Tian \cite{LiTian1992,LiTian1993}, Weinstein \cite{Weinstein1990,Weinstein1992,Weinstein1995}, as well as the puncture and infinity estimates of Han--Khuri--Weinstein--Xiong \cite{HKWXasymptotics}. The conditions concerning the Weyl conformal factor $\alpha$ follow directly from the harmonic map asymptotics and equations \eqref{eq:alphaquad1}--\eqref{eq:alphaquad2}. The primary difference between the prior works listed above, and the conditions appearing below, is the analyticity at the interior of axis rods. This, however, turns out not to be an additional restriction, see Remark \ref{rem:futureVerification}.

\begin{definition}
\label{def:blackHoleData}
Let $\Phi=(u,v,\chi,\psi)$ be an axisymmetric weak solution of
\eqref{eq:harmu} on $\mathbb R^3\setminus\Gamma$, having the prescribed potential constants and
horizon/axis rod structure along $\Gamma$ as in Section \ref{sec1}.  We say that $\Phi$ is a \emph{stationary electrovacuum singular harmonic map} with black hole rod data if the
following conditions hold.
\begin{enumerate}[label=\textup{(A\arabic*)},leftmargin=2.7em]
\item\label{A1}
\emph{Axes.} Let $\Gamma_j$ be an axis rod. For every compact subinterval $I\subset\Gamma_j$, after the
Heisenberg translation \eqref{eq:Heisenberg} which makes the traces of $v,\chi,\psi$ vanish along this rod,
there is a neighborhood $\mathcal O_I \subset\mathbb{R}^3$ and a constant $\Lambda_I$ such that
\begin{equation}
 |U|+e^u(|\chi|+|\psi|)\leq\Lambda_I
 \qquad\text{on }\mathcal O_I\setminus\Gamma_j,
\end{equation}
and the renormalized energy is finite
\begin{equation}
 \int_{\mathcal O_I}\left[
 |\nabla U|^2+e^{4u}|\Theta|^2
 +e^{2u}(|\nabla\chi|^2+|\nabla\psi|^2)
 \right]dx<\infty.
\end{equation}
Moreover, there exist real analytic functions $F,H,K,G$ on
$\mathcal O_I$ such that, with $s=\rho^2$,
\begin{equation}\label{eq:axisAnalyticAssumption}
 U=F(s,z),\qquad
 v=s^2G(s,z),\qquad
 \chi=sH(s,z),\qquad
 \psi=sK(s,z).
\end{equation}

\item\label{A2}
\emph{Nondegenerate horizons and poles.}  On the interior of a
nondegenerate horizon rod, $u,v,\chi,\psi$ are smooth, and in a neighborhood of each compact subinterval
\begin{equation}\label{eq:horizonexpansion}
 u(\rho,z)=u_0(z)+O_2(\rho^2),\qquad
 \alpha(\rho,z)=\ln\rho+O_2(1),
\end{equation}
for some smooth function $u_0$. At the lower and upper poles
$q_i^-=(0,0,z_i^-)$ and $q_i^+=(0,0,z_i^+)$, in polar coordinates
$(r_i^\pm,\theta_i^\pm)$ centered at the pole,
\begin{align}
 \alpha(r_i^-,\theta_i^-)
 &=\mathbf{b}_i+\ln\sin\frac{\theta_i^-}{2}
   +O((r_i^-)^{\gamma_i^-}),\label{eq:polealphaLower}\\
 \alpha(r_i^+,\theta_i^+)
 &=\mathbf{b}_{i+1}+\ln\cos\frac{\theta_i^+}{2}
   +O((r_i^+)^{\gamma_i^+}),\label{eq:polealphaUpper}
\end{align}
for some $\gamma_i^\pm>0$, uniformly in the angular variable.  On the
adjacent axes,
\begin{equation}\label{eq:poleU}
 U(0,z)=\frac12\ln|z-z_i^\pm|+O(1).
\end{equation}

\item\label{A3}
\emph{Degenerate horizons.}  
At each puncture $p_i=(0,0,z_i)$ either
\[
 \mathcal{J}_i \neq 0,\quad\quad\text{ or }\qquad Q_i \neq 0,
\]
and there is an axisymmetric
tangent harmonic map $\bar\Phi_i=\bar\Phi_i(\theta_i)$ from
$\mathbb{S}^2 \setminus\{\mathcal{N},\mathcal{S}\}\rightarrow \mathbb H^2_{\mathbb C}$, such that for any $\gamma\in(0,1)$,
\begin{equation}
(\bar U_i,\bar v_i,\bar\chi_i,\bar\psi_i)\in C^{3,\gamma}(\mathbb{S}^2),\qquad\quad  \bar U_i(\theta_i)=\bar u_i(\theta_i)+\ln\sin\theta_i,
\end{equation}
and
\begin{equation}\label{eq:tangentUConvergence}
 \sup_{0\leq\theta_i\leq\pi}
 \left|U(r_i,\theta_i)-\ln r_i-\bar U_i(\theta_i)\right|
 \longrightarrow0\qquad\text{as }r_i\rightarrow0,
\end{equation}
with corresponding estimates for derivatives up to order three, where $\mathcal{N},\mathcal{S}$ are the north and south poles. Moreover, there is a function
$\bar\alpha_i\in C^2([0,\pi])$ with
\begin{equation}\label{eq:tangentEndpoints}
 \bar\alpha_i(0)=\mathbf{b}_{i+1},\qquad
 \bar\alpha_i(\pi)=\mathbf{b}_i,
\end{equation}
such that the Weyl conformal factor satisfies
\begin{equation}
 \sup_{0\leq\theta_i\leq\pi}
 |\alpha(r_i,\theta_i)-\bar\alpha_i(\theta_i)|
 \longrightarrow0\qquad\text{as }r_i\rightarrow0,
\end{equation}
along with corresponding estimates for derivatives up to order two.

\item\label{A4}
\emph{Infinity.}  As $r\to\infty$,
\begin{equation}\label{eq:infinity}
 U=O(r^{-1}),\qquad \alpha=O(r^{-2}),\qquad
 \nabla\alpha=O(r^{-3}).
\end{equation}
In particular, the additive constant in $\alpha$ is fixed by $\alpha\rightarrow 0$; the same is done for the metric coefficient $w$.
\end{enumerate}
\end{definition}

\subsection{Existence and uniqueness} 
Consider a general \textit{rod data set}, consisting of a collection of horizon rods $\{\mathcal{H}_i \}_{i=1}^{N}$, axis rods $\{\Gamma_j \}_{j=1}^{N+1}$, and corresponding collections of potential constants $\{v_j\}_{j=1}^{N+1}$, $\{\chi_j\}_{j=1}^{N+1}$, $\{\psi_j\}_{j=1}^{N+1}$. By superposition of Kerr--Newman and extreme Kerr--Newman harmonic maps, one may construct an axisymmetric model map $\Phi_0:\mathbb{R}^3 \setminus\Gamma\rightarrow\mathbb{H}^2_{\mathbb{C}}$ which realizes this rod data set, and has the additional properties that its tension is globally bounded and decays quickly at infinity; for an explicit construction in the case of degenerate horizons, see Khuri--Weinstein \cite[Section 3]{KhuriWeinstein2016}. A typical exhaustion argument then yields the existence of a unique axisymmetric harmonic map $\Phi:\mathbb{R}^3 \setminus\Gamma\rightarrow\mathbb{H}^2_{\mathbb{C}}$
which is asymptotic to $\Phi_0$ in the sense that
\begin{equation}
 \sup_{\R^3\setminus\Gamma}
 d_{\mathbb{H}^2_{\mathbb{C}}}(\Phi,\Phi_0)<\infty,
 \quad
 d_{\mathbb{H}^2_{\mathbb{C}}}(\Phi,\Phi_0)\rightarrow0
 \quad\text{as }r\rightarrow\infty.
\end{equation}
We refer to the survey \cite{Weinstein} for more details, and record these conclusions in the following result.

\begin{proposition}
\label{exi1} 
Given a rod data set with arbitrary horizon, axis, and potential constant configuration with model map $\Phi_0$,
there is a unique axisymmetric harmonic map $\Phi:\R^3\setminus\Gamma\rightarrow\mathbb{H}^2_{\mathbb{C}}$
which is asymptotic to the model map.
\end{proposition}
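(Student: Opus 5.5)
We would prove Proposition~\ref{exi1} by the standard relative-energy method for harmonic maps into nonpositively curved targets, with the singular model map $\Phi_0$ used as a reference. Because $\mathbb H^2_{\mathbb C}$ is complete, simply connected, and has sectional curvature between $-4$ and $-1$, the distance function between maps satisfies Schoen's subharmonicity inequality. For two harmonic maps, and more generally for a harmonic map $\Phi$ compared with $\Phi_0$,
\[
\Delta\, d_{\mathbb H^2_{\mathbb C}}(\Phi,\Phi_0)\;\geq\;-|\tau(\Phi_0)|
\]
holds weakly on $\mathbb R^3\setminus\Gamma$. Here $\tau$ is the tension field.

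\textbf{Existence.} We exhaust $\mathbb R^3\setminus\Gamma$ by axisymmetric domains $\Omega_k$, namely balls of radius $k$ with $\rho$-tubes of radius $1/k$ around $\Gamma$ removed. On each $\Omega_k$ we solve the Dirichlet problem with boundary values $\Phi_0$; this is possible by Hamilton's theorem, since the target is nonpositively curved. Uniqueness of that Dirichlet solution implies that $\Phi_k$ is axisymmetric. Next we set $d_k=d(\Phi_k,\Phi_0)$, which vanishes on $\partial\Omega_k$. We build a global barrier $W\geq 0$ with $\Delta W\le -|\tau(\Phi_0)|$, bounded on $\mathbb R^3$ and tending to zero at infinity. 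Since the tension of $\Phi_0$ is globally bounded and decays quickly, the Newtonian potential $W=\frac{1}{4\pi}\int|\tau(\Phi_0)(y)|\,|x-y|^{-1}dy$ works. The maximum principle then gives $d_k\le W$, uniformly in $k$.

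We expect this barrier step to be the main obstacle. The tension must be integrable against $|x-y|^{-1}$, which requires the decay and boundedness claimed for $\Phi_0$. In addition, the axis $\Gamma$ has zero capacity in $\mathbb R^3$, and this must be invoked so that the inner boundary tubes do not spoil the comparison. That in turn needs an axisymmetric treatment near rods and poles, where $\Phi_0$ itself is singular. The uniform distance bound together with interior energy estimates (Schoen--Uhlenbeck, smooth here because the target is nonpositively curved) yields local $C^{k}$ bounds away from $\Gamma$. A diagonal subsequence then converges to an axisymmetric harmonic map $\Phi$ with $d(\Phi,\Phi_0)\le W$. This bound is finite everywhere and tends to $0$ as $r\to\infty$, which is exactly the asymptotic condition.

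\textbf{Uniqueness.} Let $\Phi,\Phi'$ be two such maps. Then $f=d(\Phi,\Phi')$ is bounded (by the triangle inequality through $\Phi_0$), weakly subharmonic on $\mathbb R^3\setminus\Gamma$, and tends to $0$ at infinity. Because a bounded subharmonic function extends across a set of zero capacity, $f$ is subharmonic on all of $\mathbb R^3$, and the maximum principle forces $f\equiv 0$. This finishes the proof.
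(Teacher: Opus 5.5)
Your proposal is correct and is essentially the standard exhaustion argument that the paper invokes, without writing it out, by reference to Weinstein's survey and the Li--Tian/Weinstein theory. Its ingredients are Dirichlet problems on exhausting domains with data $\Phi_0$, the inequality $\Delta d\geq-|\tau(\Phi_0)|$ with a Newtonian-potential barrier for a uniform distance bound, and, for uniqueness, subharmonicity of $d(\Phi,\Phi')$ together with removability of the polar set $\Gamma$. One correction to your self-assessment: the zero capacity of $\Gamma$ is not needed in the existence step, because $d_k$ vanishes on all of $\partial\Omega_k$ (inner tubes included) and $W\geq 0$, so $d_k\leq W$ follows immediately; capacity enters only in the uniqueness step.
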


\begin{remark}\label{rem:futureVerification}
In a forthcoming work \cite{HKWXprog}, the authors will show that each of the harmonic
maps arising from Proposition \ref{exi1} satisfy the conditions
\textup{(A1)}--\textup{(A4)}, and are hence stationary electrovacuum singular harmonic maps.
In particular, this will show that the analyticity condition assumed at the axes, is not an
additional restriction on the class of maps, but is rather a natural consequence of the regularity
theory. 
\end{remark}

\section{Weyl Conformal Factor Global Upper Bound}
\label{sec3}

The desired global upper bound for the Weyl conformal factor will be achieved with the maximum
principle, applied to a nonlinear differential inequality that generalizes \cite[Lemma 3.1]{HKWXvacuum} in the vacuum setting.
We will first compute the 2-dimensional Laplacian of $\alpha$. It will be convenient to introduce the notation
\begin{equation}
 \cA=|\grad U|^2,\qquad
 \cT=e^{4u}|\Th|^2,
 \qquad
 \cE=e^{2u}\bigl(|\grad\chi|^2+|\grad\psi|^2\bigr),
\end{equation}
and observe that the harmonic map equations imply
\begin{equation}\label{eq:Uequation}
 \Delta U=2\cT+\cE.
\end{equation}
Moreover, note that if $\delta=d\rho^2+dz^2+\rho^2 d\phi^2$ is the flat metric on $\mathbb{R}^3$ in cylindrical coordinates, then the stress--energy tensor for the harmonic map is given by
\[
 T_\Phi
 =\Phi^*h_{\CH}
  -\frac12|d\Phi|_{\delta,h_{\CH}}^2\,\delta,
\]
with components
\begin{equation}
(T_{\Phi})_{\rho\rho}=\frac{1}{2}\left(|\Phi_{\rho}|^2_{h_{\mathbb{H}^2_{\mathbb{C}}}}-|\Phi_{z}|^2_{h_{\mathbb{H}^2_{\mathbb{C}}}}\right)=-(T_{\Phi})_{zz},\quad\quad\quad\quad (T_{\Phi})_{\rho z}=\langle\Phi_{\rho},\Phi_{z}\rangle_{h_{\mathbb{H}^2_{\mathbb{C}}}}.
\end{equation}
Here the target inner products are computed with respect to
\eqref{eq:CHmetric}, for instance
\[
 |\Phi_\rho|_{h_{\CH}}^2
 =u_\rho^2+e^{4u}\Th_\rho^2
  +e^{2u}(\chi_\rho^2+\psi_\rho^2).
\]

\begin{lemma}\label{lem:alphaIdentity}
Away from $\Gamma$, the Weyl conformal factor satisfies
\begin{equation}\label{eq:alphaIdentity}
 \Dtwo\alpha=-\cA+3\cT+\cE.
\end{equation}
\end{lemma}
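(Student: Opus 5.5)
The plan is a direct computation: take $\partial_\rho$ of \eqref{eq:alphaquad1} and $\partial_z$ of \eqref{eq:alphaquad2}, add them, and simplify using the harmonic map equations \eqref{eq:harmu}. We do this on $\{\rho>0\}$, where everything is real analytic. The right-hand sides of the quadratures split into three blocks (the $U$-block, the $\Th$-block weighted by $e^{4u}$, and the $(\chi,\psi)$-block weighted by $e^{2u}$). Each block has the structure $\rho f\,(X_\rho^2-X_z^2)$, $2\rho f\,X_\rho X_z$ for a one-form $X$ and a weight $f$. So we first record a general identity. For $X=X_\rho d\rho+X_z dz$ and a weight $f>0$,
\begin{equation*}
\partial_\rho\bigl(\rho f(X_\rho^2-X_z^2)\bigr)+\partial_z\bigl(2\rho fX_\rho X_z\bigr)
= f(X_\rho^2-X_z^2)+\rho f_\rho(X_\rho^2-X_z^2)+2\rho f_zX_\rho X_z
+2\rho fX_\rho(\partial_\rho X_\rho+\partial_zX_z)-2\rho fX_z(\partial_\rho X_z-\partial_zX_\rho).
\end{equation*}
The combination $\partial_\rho X_\rho+\partial_z X_z$ is then rewritten via the three-dimensional divergence, $\operatorname{div}(fX)=\partial_\rho(fX_\rho)+\rho^{-1}fX_\rho+\partial_z(fX_z)$. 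The curl term $\partial_\rho X_z-\partial_z X_\rho$ is read off from $dX$.

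We apply this identity blockwise.

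\emph{$U$-block.} Here $f=1$, $X=dU$, and the curl vanishes. Using \eqref{eq:Uequation}, the block contributes
\[
U_\rho^2-U_z^2-2U_\rho^2+2\rho U_\rho(2\cT+\cE)=-\cA+2\rho U_\rho(2\cT+\cE).
\]

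\emph{$\Th$-block.} Here $f=e^{4u}$ and $X=\Th$. The divergence term vanishes by the second equation of \eqref{eq:harmu}. Since $d\Th=2\,d\chi\wedge d\psi$, the curl term produces $-4\rho e^{4u}\Th_z(\chi_\rho\psi_z-\chi_z\psi_\rho)$. Writing $u_\rho=U_\rho-\rho^{-1}$ and $u_z=U_z$, the weight derivatives give $4\rho e^{4u}\bigl[U_\rho(\Th_\rho^2-\Th_z^2)+2U_z\Th_\rho\Th_z\bigr]-4e^{4u}(\Th_\rho^2-\Th_z^2)$.

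\emph{$(\chi,\psi)$-block.} This is the same computation with $f=e^{2u}$ and $X=d\chi$, $d\psi$, so both curls vanish. The divergence terms are replaced using the last two equations of \eqref{eq:harmu}, which produce the cross terms $4\rho e^{4u}\bigl(\chi_\rho\,\Th\cdot\grad\psi-\psi_\rho\,\Th\cdot\grad\chi\bigr)$.

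Collecting, three groups of terms must cancel or combine. First, the coupling cross terms coming from the $\chi,\psi$ equations and the curl term of the $\Th$-block should cancel exactly. Expanding $\Th\cdot\grad\psi=\Th_\rho\psi_\rho+\Th_z\psi_z$ shows they reduce to $\pm4\rho e^{4u}\Th_z(\chi_\rho\psi_z-\chi_z\psi_\rho)$ with opposite signs. Second, all terms carrying an explicit factor $\rho U_\rho$ or $\rho U_z$ must cancel. That is, $2\rho U_\rho(2\cT+\cE)$ has to be absorbed by the $U$-parts of the weight derivatives, $4\rho e^{4u}[U_\rho(\Th_\rho^2-\Th_z^2)+2U_z\Th_\rho\Th_z]$ and its $e^{2u}$ analogue. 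This does not happen pointwise term by term. Instead it should follow from the $z$-derivative pieces of the same divergence substitutions, and we will organize the bookkeeping so that these terms pair off. Third, the remaining terms without $\rho$ factors, namely $(1-4)e^{4u}(\Th_\rho^2-\Th_z^2)$ and $(1-2)e^{2u}(\cdots)$, together with the $-\rho^{-1}$ contributions of the weights inside the divergence substitutions, should assemble into $3\cT+\cE$.

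The main obstacle is the second and third groups: keeping track of how the $\rho^{-1}$ in $u=U-\ln\rho$ redistributes between the anisotropic combinations $X_\rho^2-X_z^2$ and the isotropic norms $|X|^2$ appearing in $\cT$ and $\cE$. If the direct expansion becomes unwieldy, the fallback is to phrase the computation through the stress–energy tensor $T_\Phi$. Its conservation law $\operatorname{div}_\delta T_\Phi=0$, which holds for any harmonic map, encodes all the divergence substitutions at once. Then $\Dtwo\alpha$ equals $(T_\Phi)_{\rho\rho}+\rho^{-1}$ times the $\rho$-component of that conservation law, up to the explicit terms arising from the $-\ln\rho$ shift. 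These explicit terms give exactly $-\cA+3\cT+\cE$, in agreement with the paper's remark about exploiting the stress–energy structure.
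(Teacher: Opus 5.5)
Your route differs from the paper's and can be completed, but as written it stops at exactly the step you call the main obstacle, and the mechanism you propose there is not the right one. Your general identity, the $U$-block value $-\cA+2\rho U_\rho(2\cT+\cE)$, and the cancellation of the $\Th$-block curl term against the coupling terms from the $\chi,\psi$ equations are all correct.

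The trouble starts with ``the divergence term vanishes'' for the weighted blocks. Only $\operatorname{div}(fX)$ vanishes or gets replaced. The substitution
\[
 f(\partial_\rho X_\rho+\partial_zX_z)=\operatorname{div}(fX)-f_\rho X_\rho-f_zX_z-\rho^{-1}fX_\rho
\]
leaves the remainder $-2\rho f_\rho X_\rho^2-2\rho f_zX_\rho X_z-2fX_\rho^2$, and this remainder is what makes everything work. You need $-4\rho U_\rho\cT$ to cancel the $U$-block, but the weight derivatives alone supply $+4\rho e^{4u}U_\rho(\Th_\rho^2-\Th_z^2)$: wrong sign and anisotropic. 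The $z$-piece of the remainder only kills the $U_z$ term; it does not rescue the $U_\rho$ terms as you suggest.

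Keeping the remainder, your identity collapses for every block to
\[
 \partial_\rho\bigl(\rho f(X_\rho^2-X_z^2)\bigr)+\partial_z\bigl(2\rho fX_\rho X_z\bigr)
 =-f|X|^2-\rho f_\rho|X|^2+2\rho X_\rho\operatorname{div}(fX)-2\rho fX_z(\partial_\rho X_z-\partial_zX_\rho),
\]
with the $f_z$ terms cancelling identically. For $f=e^{ku}$ one has $f_\rho=k(U_\rho-\rho^{-1})f$, so
\[
 -f|X|^2-\rho f_\rho|X|^2=(k-1)f|X|^2-k\rho U_\rho f|X|^2 .
\]
The $\Th$-block then gives $3\cT-4\rho U_\rho\cT$ plus the curl term, and the $(\chi,\psi)$-blocks give $\cE-2\rho U_\rho\cE$ plus the coupling terms. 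These $\rho U_\rho$ terms cancel pointwise against $2\rho U_\rho(2\cT+\cE)$ from the $U$-block. This settles your second and third groups together and yields $-\cA+3\cT+\cE$.

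The paper organizes the same computation differently. It sets $\xi=\alpha-2u-\ln\rho$, so that $\xi_\rho=2\rho(T_\Phi)_{\rho\rho}$ and $\xi_z=2\rho(T_\Phi)_{\rho z}$ for the full map $\Phi=(u,v,\chi,\psi)$. The covariant equation $D_\rho\Phi_\rho+D_z\Phi_z+\rho^{-1}\Phi_\rho=0$, together with $D_\rho\Phi_z=D_z\Phi_\rho$, gives $\Dtwo\xi=-|\grad u|^2-\cT-\cE$ in one line. Adding $2\Dtwo u+\Dtwo\ln\rho$ turns $-|\grad u|^2$ into $-|\grad U|^2$ and contributes $4\cT+2\cE$.

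In that formulation, the curl/coupling cancellation you check by hand is automatic: it is torsion-freeness and metric compatibility of the pullback connection. The whole $\rho^{-1}$ redistribution is isolated in $|\grad U|^2=|\grad u|^2+2u_\rho/\rho+\rho^{-2}$. Your completed route is more elementary, using only flat derivatives and the explicit divergence-form equations, and it shows directly that the coefficients come from $3=4-1$ and $1=2-1$. Your stated fallback, carried out with $\xi$, is essentially the paper's proof.
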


\begin{proof}
Set
\begin{equation}
 \xi=\alpha-2u-\ln\rho.
\end{equation}
Substituting $U=u+\ln\rho$ into
\eqref{eq:alphaquad1}--\eqref{eq:alphaquad2} produces
\begin{align}
 \xi_\rho
 =\rho\left[|\Phi_\rho|_{h_{\CH}}^2-|\Phi_z|_{h_{\CH}}^2\right]=2\rho (T_{\Phi})_{\rho\rho},
 \qquad\quad
 \xi_z
 =2\rho\,\langle\Phi_\rho,\Phi_z\rangle_{h_{\CH}}=2\rho(T_{\Phi})_{\rho z}.
\end{align}
Next differentiate these two equations, denoting by $D$ the
pullback connection of $\CH$, to obtain
\begin{align*}
 \Dtwo \xi
 &=|\Phi_\rho|_{h_{\CH}}^2-|\Phi_z|_{h_{\CH}}^2
  +\rho\,\partial_\rho
     (|\Phi_\rho|_{h_{\CH}}^2-|\Phi_z|_{h_{\CH}}^2)
  +2\rho\,\partial_z\langle\Phi_\rho,\Phi_z\rangle_{h_{\CH}}\\
 &=|\Phi_\rho|_{h_{\CH}}^2-|\Phi_z|_{h_{\CH}}^2
  +2\rho\left\langle
       \Phi_\rho,D_\rho\Phi_\rho+D_z\Phi_z
     \right\rangle_{h_{\CH}}.
\end{align*}
In covariant form, the axisymmetric harmonic map equations are
\[
 D_\rho\Phi_\rho+D_z\Phi_z+\rho^{-1}\Phi_\rho=0,
\]
and thus
\begin{equation}
 \Dtwo \xi=-|\Phi_\rho|_{h_{\CH}}^2-|\Phi_z|_{h_{\CH}}^2
 =-|\grad u|^2-\cT-\cE.
\end{equation}
Moreover,
\[
 \Dtwo u=2\cT+\cE-\frac{u_\rho}{\rho},
 \qquad
 \Dtwo\ln\rho=-\frac1{\rho^2}.
\]
Since $\alpha=\xi+2u+\ln\rho$ and
$U=u+\ln\rho$, it follows that
\begin{align*}
 \Dtwo\alpha
 &=-|\grad u|^2-\cT-\cE+4\cT+2\cE
   -\frac{2u_\rho}{\rho}-\frac1{\rho^2}\\
 &=-|\grad U|^2+3\cT+\cE,
\end{align*}
which gives \eqref{eq:alphaIdentity}.
\end{proof}

We are now ready to establish the nonlinear differential inequality for the Weyl conformal factor.

\begin{proposition}\label{prop:alphaIneq}
Away from $\Gamma$, the Weyl conformal factor satisfies
\begin{equation}\label{eq:alphaIneq}
 \Dtwo\alpha+\frac{|\grad\alpha|}{\rho}
 \geq 2\cT\geq0.
\end{equation}
\end{proposition}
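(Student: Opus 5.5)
The plan is to combine the identity of Lemma~\ref{lem:alphaIdentity} with a pointwise lower bound for $|\grad\alpha|/\rho$ coming from the quadratures \eqref{eq:alphaquad1}--\eqref{eq:alphaquad2}. By Lemma~\ref{lem:alphaIdentity}, $\Dtwo\alpha=-\cA+3\cT+\cE$. So \eqref{eq:alphaIneq} is equivalent to
\begin{equation}
 \frac{|\grad\alpha|}{\rho}\geq \cA-\cT-\cE
\end{equation}
away from $\Gamma$. The second inequality $2\cT\geq0$ is trivial.

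The key observation is the complex-square structure of the quadratures. Introduce the complex-valued quantities
\begin{equation}
 a=U_\rho+iU_z,\qquad
 t=e^{2u}(\Th_\rho+i\Th_z),\qquad
 c=e^{u}(\chi_\rho+i\chi_z),\qquad
 d=e^{u}(\psi_\rho+i\psi_z).
\end{equation}
Each square has real part equal to the difference of squared components and imaginary part equal to twice their product; for instance $a^2=U_\rho^2-U_z^2+2iU_\rho U_z$. Reading off \eqref{eq:alphaquad1}--\eqref{eq:alphaquad2}, one checks term by term that
\begin{equation}
 \frac{\alpha_\rho+i\alpha_z}{\rho}=a^2+t^2+c^2+d^2 .
\end{equation}
This is the step where some care is needed: the $e^{4u}$ weight on the $\Th$ terms and the $e^{2u}$ weight on the $\chi,\psi$ terms must match the chosen normalizations of $t,c,d$, and the factor $2$ in \eqref{eq:alphaquad2} must produce exactly the imaginary parts of the squares.

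Taking moduli and applying the reverse triangle inequality gives
\begin{equation}
 \frac{|\grad\alpha|}{\rho}=|a^2+t^2+c^2+d^2|\geq |a|^2-|t|^2-|c|^2-|d|^2 .
\end{equation}
Here $|a|^2=\cA$ and $|t|^2=e^{4u}|\Th|^2=\cT$. Also $|c|^2+|d|^2=e^{2u}(|\grad\chi|^2+|\grad\psi|^2)=\cE$, since all functions are axisymmetric and the Euclidean gradients involve only the $\rho,z$ derivatives. This is exactly the required bound. Substituting it into Lemma~\ref{lem:alphaIdentity} yields
\begin{equation}
 \Dtwo\alpha+\frac{|\grad\alpha|}{\rho}\geq -\cA+3\cT+\cE+\cA-\cT-\cE=2\cT .
\end{equation}

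There is no real obstacle beyond verifying the complex-square identity. All quantities are real analytic on $\{\rho>0\}$, so the computation is pointwise and valid everywhere away from $\Gamma$.
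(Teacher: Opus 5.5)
Your proposal is correct and follows the paper's proof: you read the complex-square identity $\rho^{-1}(\alpha_\rho+i\alpha_z)=\zeta_U^2+\zeta_0^2+\zeta_1^2+\zeta_2^2$ off the quadratures, apply the reverse triangle inequality to get $|\grad\alpha|/\rho\geq\cA-\cT-\cE$, and substitute into Lemma~\ref{lem:alphaIdentity}. Your complex quantities $a,t,c,d$ have exactly the same weight normalizations as the paper's $\zeta_U,\zeta_0,\zeta_1,\zeta_2$.
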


\begin{proof}
Consider the complex valued functions
\[
 \zeta_U=U_\rho+iU_z,
 \quad \zeta_0=e^{2u}(\Th_\rho+i\Th_z),
 \quad \zeta_1=e^u(\chi_\rho+i\chi_z),
 \quad \zeta_2=e^u(\psi_\rho+i\psi_z),
\]
and observe that the quadrature equations imply
\begin{equation}\label{eq:complexSquareExact}
 \frac{|\grad\alpha|}{\rho}
 =\left|\zeta_U^2+\zeta_0^2+\zeta_1^2+\zeta_2^2\right|.
\end{equation}
Indeed, the real and imaginary parts of the complex number on the right-hand side of
\eqref{eq:complexSquareExact} are exactly $\alpha_\rho/\rho$ and
$\alpha_z/\rho$, by \eqref{eq:alphaquad1}--\eqref{eq:alphaquad2}. Furthermore, the triangle inequality gives
\[
 \left|\zeta_U^2+\sum_{j=0}^2 \zeta_j^2\right|
 \geq |\zeta_U|^2-\sum_{j=0}^2|\zeta_j|^2
 =\cA-\cT-\cE.
\]
Applying Lemma \ref{lem:alphaIdentity}, we find that
\begin{equation}
    \Dtwo \alpha +\frac{|\nabla\alpha|}{\rho}\geq -\cA+3\cT+\cE +(\cA-\cT-\cE)\geq 2\mathcal{T}.
\end{equation}
This yields \eqref{eq:alphaIneq}.
\end{proof}

We will now obtain the global upper bound for $\alpha$, by applying the maximum principle on a sequence of exhaustion domains. The first step is to estimate the asymptotic profile at each puncture.

\begin{lemma}\label{lem:tangentProfileMax}
Let $p_i$ be a degenerate horizon puncture, and let $\bar{\alpha}_i$ be the function provided by condition 
\textup{(A3)}. Then
\begin{equation}\label{eq:tangentProfileMax}
 \overline\alpha_i(\theta)
 \leq\max\{\mathbf{b}_i,\mathbf{b}_{i+1}\},
 \qquad 0\leq\theta\leq\pi.
\end{equation}
\end{lemma}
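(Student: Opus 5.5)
The plan is to pass the differential inequality of Proposition \ref{prop:alphaIneq} to the limit under the blow-up at $p_i$. This produces an ordinary differential inequality for $\overline\alpha_i$ on $(0,\pi)$, and the maximum principle for that ODE inequality, together with the endpoint values \eqref{eq:tangentEndpoints}, will give \eqref{eq:tangentProfileMax}.

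\textbf{Step 1: the limiting inequality.} Work in spherical coordinates $(r,\theta)$ centered at $p_i$, so that $\rho=r\sin\theta$. The operator
\[
L\alpha=\Dtwo\alpha+\rho^{-1}|\grad\alpha|
\]
has the blow-up scaling $r^{-2}$. Under (A3), $\alpha(r,\theta)\to\overline\alpha_i(\theta)$ in $C^2$. Hence $r\alpha_r\to0$ and $r^2\alpha_{rr}\to 0$, while $\alpha_\theta\to\overline\alpha_i'$ and $\alpha_{\theta\theta}\to\overline\alpha_i''$.

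Multiply \eqref{eq:alphaIneq} by $r^2$ and let $r\to0$ at fixed $\theta\in(0,\pi)$. Since
\[
\Dtwo=\partial_r^2+r^{-1}\partial_r+r^{-2}\partial_\theta^2
\qquad\text{and}\qquad
|\grad\alpha|^2=\alpha_r^2+r^{-2}\alpha_\theta^2,
\]
this yields
\begin{equation}
\overline\alpha_i''+\frac{|\overline\alpha_i'|}{\sin\theta}\geq \lim_{r\to 0} 2r^2\cT\geq 0
\qquad\text{on } (0,\pi).
\end{equation}
I will check that the derivative decay implicit in the statement ``corresponding estimates for derivatives up to order two'' is scale-invariant, so that it really gives $r^k\partial_r^k\alpha\to 0$. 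That is the natural reading, and I will invoke it.

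\textbf{Step 2: the one-dimensional maximum principle.} Suppose $\overline\alpha_i$ attains a value larger than $\max\{\mathbf b_i,\mathbf b_{i+1}\}$. Then it has an interior maximum at some $\theta_0\in(0,\pi)$, with maximum value $M$.

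The inequality reads $\overline\alpha_i''+c(\theta)\,|\overline\alpha_i'|\geq0$, where $c$ is locally bounded on $(0,\pi)$. Locally we can write $|\overline\alpha_i'|=\beta(\theta)\,\overline\alpha_i'$ with $|\beta|\leq c$ bounded measurable. So $\overline\alpha_i$ is a subsolution of a linear second order ODE with bounded coefficients on compact subintervals. The Hopf/strong maximum principle for such ODEs then forces $\overline\alpha_i\equiv M$ on the connected interval $(0,\pi)$.

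A direct alternative uses $w=\overline\alpha_i'$, which satisfies $w'\geq -c|w|$. Then $(e^{C\theta}w)'\geq0$ where $w>0$, and $(e^{-C\theta}w)'\ge 0$ where $w<0$. This shows that $w$ cannot change sign from positive to negative. Consequently $\overline\alpha_i$ is monotone, or decreasing then increasing, and has no strict interior maximum.

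By continuity on $[0,\pi]$, a constant $\overline\alpha_i\equiv M$ would give $\overline\alpha_i(0)=M>\mathbf b_{i+1}$, which contradicts \eqref{eq:tangentEndpoints}.

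\textbf{Main obstacle.} The delicate point is Step 1, namely justifying the limit of $r^2\cdot\rho^{-1}|\grad\alpha|$ and of the radial terms. This is exactly where the scale-invariant derivative convergence in (A3) is needed. Near $\theta=0,\pi$ the coefficient $1/\sin\theta$ blows up, but this is harmless: the argument only uses interior maxima together with continuity up to the endpoints, and $\overline\alpha_i\in C^2([0,\pi])$.
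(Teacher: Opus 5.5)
Your proposal is correct and follows the paper's proof. The same blow-up limit at $p_i$ gives $\overline\alpha_i''+|\overline\alpha_i'|/\sin\theta\geq0$, and your direct argument with $w=\overline\alpha_i'$ is the paper's propagation-of-positivity argument. The only change is that you use local integrating factors $e^{\pm C\theta}$ on compact subintervals where the paper uses the exact factor $\tan(\theta/2)$; your strong-maximum-principle variant is an equally valid way to package the same step.
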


\begin{proof}
Insert the asymptotic expansion of $\alpha$ near $p_i$, as given by condition \textup{(A3)}, into the differential inequality of Proposition~\ref{prop:alphaIneq}, multiply by $r_i^2$ and take the limit as $r_i \rightarrow 0$ to find
\begin{equation}\label{eq:tangentODE}
 \overline\alpha_i''(\theta)
 +\frac{|\overline\alpha_i'(\theta_i)|}{\sin\theta_i}\geq0,
 \qquad 0<\theta_i<\pi.
\end{equation}
We will show that a function satisfying \eqref{eq:tangentODE} cannot exceed the
larger endpoint value.

In what follows, for convenience, we will remove the $i$ subscript from $\theta_i$, and set $y=\overline\alpha_i'$. Suppose that $y$ is positive at some interior point.  On the
connected component of $\{y>0\}$ containing that point,
\eqref{eq:tangentODE} reads
\[
 y'+\frac{y}{\sin\theta}\geq0.
\]
Since
\[
 \frac{d}{d\theta}\ln\tan\frac\theta2=\frac1{\sin\theta},
\]
it follows that
\begin{equation}\label{eq:integratingTangent}
 \frac{d}{d\theta}\left(y\,\tan\frac\theta2\right)\geq0.
\end{equation}
Let $(a,b)$ be that connected component.  If $b<\pi$, continuity gives
$y(b)=0$, while \eqref{eq:integratingTangent} says that
$y(\theta)\tan(\theta/2)$ is nondecreasing and is already positive at the
chosen point, yielding a contradiction.  Hence $b=\pi$, and $y>0$ on the entire open
interval from the chosen point to $\pi$.  The function $\overline{\alpha}_i$ is therefore strictly
increasing there.

If \eqref{eq:tangentProfileMax} failed, continuity and the endpoint values
\eqref{eq:tangentEndpoints} would give an interior point $\theta_*$, at which $\overline{\alpha}_i$ takes a
value larger than both endpoint values.  By the mean value theorem there
would be a point to the left of $\theta_*$ where $y>0$.  The preceding
paragraph then shows that $y>0$ from that point all the way to $\pi$, so
$\overline\alpha_i(\pi)>\overline\alpha_i(\theta_*)$, contradicting the
choice of $\theta_*$.  We conclude that \eqref{eq:tangentProfileMax} must hold.
\end{proof}

\begin{proposition}\label{prop:alphaUpper}
Set
\begin{equation}\label{eq:defectMaximum}
 M=\max_{1\leq j\leq N+1}\mathbf{b}_j.
\end{equation}
Then $M\geq0$ and
\begin{equation}\label{eq:alphaUpper}
 \alpha(\rho,z)\leq M
 \qquad\text{throughout the open orbit half-plane }\rho>0.
\end{equation}
\end{proposition}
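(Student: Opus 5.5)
The bound $M\geq 0$ is immediate from \eqref{eq:outerDefects}, since $\mathbf b_1=\mathbf b_{N+1}=0$. For \eqref{eq:alphaUpper} I would use the maximum principle on the operator $L\alpha=\Dtwo\alpha+\rho^{-1}|\grad\alpha|$. By Proposition~\ref{prop:alphaIneq}, $L\alpha\geq0$ on $\{\rho>0\}$. The operator is not linear, but at any point where $\alpha$ has an interior local maximum we have $\grad\alpha=0$ and $\Dtwo\alpha\leq0$. Writing $|\grad\alpha|=b\cdot\grad\alpha$ with $b=\grad\alpha/|\grad\alpha|$ (set $b=0$ where $\grad\alpha=0$) gives a linear inequality $\Dtwo\alpha+\rho^{-1}b\cdot\grad\alpha\geq0$ with bounded measurable drift on compact subsets of $\{\rho>0\}$. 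The strong and weak maximum principles (Hopf/Gilbarg--Trudinger for $L^\infty$ drift) then apply on domains away from the axis.

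To obtain \eqref{eq:alphaUpper} I would argue by contradiction. Suppose $\sup_{\rho>0}\alpha>M$, and fix $\varepsilon>0$ with $M+2\varepsilon<\sup\alpha$. The goal is an exhaustion domain $\Omega_\delta$ whose boundary values of $\alpha$ are all at most $M+\varepsilon$. The weak maximum principle then gives $\alpha\leq M+\varepsilon$ on $\Omega_\delta$, and letting $\delta\to0$ yields a contradiction. The boundary of $\Omega_\delta$ consists of four kinds of pieces.

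\begin{enumerate}[label=(\alph*),leftmargin=2em]
\item A large half-circle $r=R$. Here $\alpha=O(R^{-2})$ by \textup{(A4)}, so $\alpha\leq\varepsilon\leq M+\varepsilon$ for $R$ large.
\item Small half-circles around poles and punctures. At a pole $q_i^\pm$, \eqref{eq:polealphaLower}--\eqref{eq:polealphaUpper} give $\alpha\leq\mathbf b_i+O(r^{\gamma})$ or $\alpha\leq\mathbf b_{i+1}+O(r^{\gamma})$, since $\ln\sin$ and $\ln\cos$ of a half-angle are at most $0$. At a puncture $p_i$, \textup{(A3)} together with Lemma~\ref{lem:tangentProfileMax} gives $\alpha\leq\max\{\mathbf b_i,\mathbf b_{i+1}\}+o(1)$ uniformly in $\theta$. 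In both cases the bound is $\leq M+\varepsilon$ for small radius.
\item Segments $\rho=\delta$ over compact subintervals of the axis rods. The trace of $\alpha$ on $\Gamma_j$ is the constant $\mathbf b_j$, and $\alpha$ is continuous up to interior axis points (the analyticity in \textup{(A1)} plus the quadratures give this). Hence $\alpha\leq\mathbf b_j+\varepsilon$ for $\delta$ small, locally uniformly.
\item Segments $\rho=\delta$ over compact subintervals of horizon rods. By \eqref{eq:horizonexpansion}, $\alpha=\ln\rho+O(1)\to-\infty$, so these pieces are harmless.
\end{enumerate}

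The main obstacle is making the pieces in (b) and (c) match up uniformly. The axis and horizon segments must meet the small half-circles around poles and punctures, and the radius $r$ of those half-circles must be chosen first, then the $\delta$. Near a pole, the pole expansion is uniform in angle and so covers the corner region where the half-circle meets $\rho=\delta$. On the remaining compact pieces of rods, convergence as $\rho\to0$ is uniform. I would therefore order the choices as $R$ large, then pole and puncture radii small, then $\delta$ small depending on these. One should also check that $\sup\alpha$ is approached inside some $\Omega_\delta$; this holds since the $\Omega_\delta$ exhaust $\{\rho>0\}$. If only a localized version of the weak maximum principle is available, because the drift $\rho^{-1}$ is bounded by $\delta^{-1}$ on $\Omega_\delta$, it applies directly on each $\Omega_\delta$.
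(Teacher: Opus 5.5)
Your proposal is correct and follows essentially the same route as the paper. You rewrite $\rho^{-1}|\nabla\alpha|$ as a bounded measurable drift on exhaustion domains $\{\rho>\eta,\ r<R,\ r_q>\sigma\}$, and you use the same boundary estimates: \textup{(A4)} on the outer arc, the pole expansions, \textup{(A3)} with Lemma~\ref{lem:tangentProfileMax} at punctures, the axis trace $\mathbf b_j$ (Proposition~\ref{prop:axisFactor}), and $\alpha\to-\infty$ over horizon rods. You take limits in the same order, and your contradiction framing is only a repackaging of the paper's direct limit argument.
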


\begin{proof}
Observe that the semi-infinite axis normalization \eqref{eq:outerDefects} gives $M\geq0$.  To establish the
upper bound for $\alpha$, we will use the maximum principle on a collection of exhaustion domains. 
Let $Q$ be the finite set consisting
of all poles of nondegenerate horizon rods and all degenerate punctures.  Choose
$\sigma>0$, $R\gg1$, and $0<\eta<\sigma/2$, and define
\begin{equation}\label{domain1}
 \Omega_{\eta,\sigma,R}
 =\{(\rho,z)\mid \rho>\eta,\ \rho^2+z^2<R^2,
 \ r_q>\sigma\text{ for every }q\in Q\},
\end{equation}
where $r_q$ denotes the Euclidean distance to point $q\in\mathbb{R}^3$. The value $R$ should be chosen sufficiently large so that all points of $Q$ lie within the sphere of radius $R/2$.
On this domain \eqref{domain1} set
\[
 B=\begin{cases}
 \dfrac{\nabla\alpha}{\rho|\nabla\alpha|},&\nabla\alpha\neq0,\\[1.1ex]
 0,&\nabla\alpha=0,
 \end{cases}
\]
and note that $B$ is bounded and measurable, with $|B|\leq\eta^{-1}$.
By Proposition~\ref{prop:alphaIneq} we have
\[
 \Delta_{\rho,z}\alpha+B\cdot\nabla\alpha\geq0,
\]
and hence the maximum principle yields
\begin{equation}\label{eq:MPexhaust}
 \sup_{\Omega_{\eta,\sigma,R}}\alpha
 \leq\sup_{\partial\Omega_{\eta,\sigma,R}}\alpha.
\end{equation}

We now estimate the boundary pieces.  At a degenerate puncture,
condition~\textup{(A3)} and Lemma~\ref{lem:tangentProfileMax} provide a modulus
$\varepsilon_q(\sigma)\to0$ such that
\begin{equation}
 \sup_{r_q=\sigma}\alpha
 \leq M+\varepsilon_q(\sigma).
\end{equation}
At a nondegenerate lower or upper pole,
\eqref{eq:polealphaLower}--\eqref{eq:polealphaUpper} and
$\ln\sin(\theta/2),\ln\cos(\theta/2)\leq0$ imply
\begin{equation}
 \sup_{r_q=\sigma}\alpha\leq M+C_q\sigma^{\gamma_q},
\end{equation}
for some $\gamma_q >0$. Thus, since $Q$ is finite, the error in all these small circle estimates tends uniformly
to zero with $\sigma$.
On portions of the boundary with $\rho=\eta$ that project to compact subsets of an 
axis rod, Proposition~\ref{prop:axisFactor} produces
\[
 \alpha 
 \leq M+C_{\sigma,R}\eta^2,
\]
while on portions of the boundary projecting to the interior of a nondegenerate horizon rod,
\eqref{eq:horizonexpansion} gives
\[
 \alpha\leq\ln\eta+C_{\sigma,R}.
\]
Moreover, on the outer semicircle $\rho^2 +z^2 =R^2$, it follows from \eqref{eq:infinity} that
\begin{equation}
|\alpha|\leq CR^{-2}\leq M+CR^{-2}, 
\end{equation}
since $M\geq0$.

Fix $(\rho_0,z_0)$ arbitrarily with $\rho_0>0$.  For all sufficiently large $R$ and sufficiently small $\sigma,\eta$, this point belongs to $\Omega_{\eta,\sigma,R}$.  In \eqref{eq:MPexhaust}, first let
$\eta\rightarrow0$ with $\sigma,R$ fixed, then let $\sigma\rightarrow0$, and
finally let $R\to\infty$.  The horizon term tends to $-\infty$ and all error
terms vanish. It follows that $\alpha(\rho_0,z_0)\leq M$.
\end{proof}

\section{Analysis at the Axis}
\label{sec4}

In this section we will first establish technical properties concerning the Weyl conformal factor expansion at axis rods,
and will then prove a rigidity statement for analytic functions satisfying the differential inequality \eqref{eq:alphaIneq}.

\subsection{Axis factorization}
Fix a compact subinterval $I\subset\Gamma_j$ of an axis rod and assume that target space isometry
\eqref{eq:Heisenberg} has been applied so that all potential constants vanish,
\begin{equation}\label{eq:axisGauge}
 v=\chi=\psi=0\qquad\text{on the rod.}
\end{equation}
By \eqref{eq:axisAnalyticAssumption} of condition (A1), there exist real analytic functions $F,H,K,G$ on
a neighborhood of the subinterval such that
\begin{equation}
 U=F(s,z),\qquad
 v=s^2G(s,z),\qquad
 \chi=sH(s,z),\qquad
 \psi=sK(s,z).
\end{equation}
where $s=\rho^2$. We will also set
\[
 f(z)=F(0,z),\qquad h(z)=H(0,z),\qquad k(z)=K(0,z).
\]

\begin{proposition}\label{prop:axisFactor}
Let $I\subset\Gamma_j$ be an axis rod subinterval as described above. Then there exists a neighborhood
$\mathcal{O}_I\subset\mathbb{R}^3$, and a real analytic function $A$, such that on $\mathcal{O}_I$,
\begin{equation}\label{eq:alphaAxisAnalytic}
 \alpha(\rho,z)=\mathbf{b}_j+\rho^2A(\rho^2,z).
\end{equation}
Moreover, on $\mathcal{O}_I \cap\Gamma_j$ it holds that
\begin{equation}\label{lem:alphaNormal}
 \alpha_{\rho\rho}(0,z)
 =-f'(z)^2+4e^{2f(z)}\bigl(h(z)^2+k(z)^2\bigr).
\end{equation}
\end{proposition}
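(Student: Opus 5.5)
The plan is to substitute the analytic factorization \eqref{eq:axisAnalyticAssumption} into the quadrature equations \eqref{eq:alphaquad1}--\eqref{eq:alphaquad2}. The aim is to show that $\alpha_\rho/\rho$ and $\alpha_z/\rho$ are real analytic functions of $(s,z)$, with $\alpha_z$ divisible by $s$. Integrating then gives the form \eqref{eq:alphaAxisAnalytic}.

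\textbf{Step 1: the ingredients near the rod.} Write $e^{2u}=e^{2U}\rho^{-2}=s^{-1}e^{2F}$ and $e^{4u}=s^{-2}e^{4F}$. With $s=\rho^2$ we have $\partial_\rho=2\rho\partial_s$. The derivatives then take the form
\[
U_\rho=2\rho F_s,\qquad U_z=F_z,\qquad \chi_\rho=2\rho(H+sH_s),\qquad \chi_z=sH_z,
\]
and similarly for $\psi$. For $\Th$, using $v=s^2G$, $\chi=sH$, $\psi=sK$:
\[
\Th_\rho=v_\rho-\psi\chi_\rho+\chi\psi_\rho=2\rho s\,(\cdots),\qquad
\Th_z=s^2(\cdots).
\]
Here the $(\cdots)$ are analytic in $(s,z)$; note that the $\psi\chi_\rho-\chi\psi_\rho$ terms are $O(\rho s)$. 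Consequently:
\begin{itemize}
\item $e^{4u}\Th_\rho^2=\rho^2 s^2 s^{-2}(\cdots)=s(\cdots)$ and $e^{4u}\Th_z^2=s^2(\cdots)$;
\item $e^{4u}\Th_\rho\Th_z=\rho s(\cdots)$;
\item $e^{2u}\chi_\rho^2=4e^{2F}(H+sH_s)^2$, analytic and with value $4e^{2f}h^2$ at $s=0$;
\item $e^{2u}\chi_z^2=s e^{2F}H_z^2$;
\item $e^{2u}\chi_\rho\chi_z=2\rho e^{2F}(H+sH_s)H_z$.
\end{itemize}

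\textbf{Step 2: the quadratures.} Inserting these expressions, \eqref{eq:alphaquad1} becomes
\[
\alpha_\rho=\rho\,P(s,z),\qquad P(0,z)=-f'(z)^2+4e^{2f}(h^2+k^2).
\]
Every term of \eqref{eq:alphaquad2} carries a factor $\rho$ inside the bracket, so $\alpha_z=2\rho\cdot\rho(\cdots)=s\,R(s,z)$ with $R$ analytic.

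\textbf{Step 3: integration.} Since $\partial_s\alpha=P/2$ is analytic in $(s,z)$, we obtain $\alpha=a(z)+sA(s,z)$ with $A$ analytic. The relation $\alpha_z=sR$ at $s=0$ forces $a'(z)=0$, so $a$ is constant. This constant is the trace $\alpha|_{\Gamma_j}=\mathbf b_j$, which gives \eqref{eq:alphaAxisAnalytic}. Finally, $\alpha_{\rho\rho}(0,z)=2A(0,z)=P(0,z)$, which is \eqref{lem:alphaNormal}.

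\textbf{Main obstacle.} The only delicate point is Step 1's bookkeeping. One must verify that the singular weights $e^{4u}\sim s^{-2}$ and $e^{2u}\sim s^{-1}$ are exactly compensated by the vanishing orders of $\Th$, $\chi$ and $\psi$, so that no negative powers of $s$ survive. In particular, one must check that the cross terms in $\Th_\rho$ vanish to order $\rho s$ and not merely $\rho$. Domain issues are routine: the functions are analytic on a neighborhood of $I$, and we shrink $\mathcal O_I$ if necessary.
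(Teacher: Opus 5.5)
Your proposal is correct and follows essentially the paper's argument: substitute the factorization into the radial quadrature so that $\alpha_\rho/\rho$ is analytic in $(s,z)$, integrate from $s=0$, and read off $\alpha_{\rho\rho}(0,z)=2A(0,z)=P(0,z)$. The one difference is that you use the $z$-quadrature ($\alpha_z=sR$) to derive that the axis trace is constant, whereas the paper simply takes $\mathbf b_j=\alpha|_{\Gamma_j}$ as a known constant.
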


\begin{proof} 
For $s>0$, write
$\widehat\alpha(s,z)=\alpha(\sqrt{s},z)$, and define the analytic functions
\[
 \begin{aligned}
 a&=H+sH_s, & b&=K+sK_s,\\
 c&=2G+s(G_s-KH_s+HK_s),
 &d&=G_z-KH_z+HK_z.
 \end{aligned}
\]
Then
\begin{equation}
 U_\rho=2\rho F_s,\qquad  \chi_\rho=2\rho a,\qquad \psi_\rho=2\rho b,\qquad  \Theta_\rho=2\rho^3c,
\end{equation} 
\begin{equation}
 U_z=F_z,\qquad
 \chi_z=sH_z,\qquad
  \psi_z=sK_z,\qquad
 \Theta_z=\rho^4d.
\end{equation}
Since $e^{2u}=s^{-1}e^{2F}$ and $e^{4u}=s^{-2}e^{4F}$, the radial Weyl
quadrature equation \eqref{eq:alphaquad1} yields
\begin{equation}
 \widehat\alpha_s=2sF_s^2-\frac12F_z^2
 +2se^{4F}c^2-\frac12s^2e^{4F}d^2
 +2e^{2F}(a^2+b^2)
 -\frac12se^{2F}(H_z^2+K_z^2),
\end{equation}
which is real analytic in $(s,z)$.  In particular,
there exists an analytic function $A$ such that, in a neighborhood of $I$,
\[
 \widehat\alpha(s,z)=\mathbf{b}_j
 +\int_0^s\widehat\alpha_s(t,z)\,dt
 =\mathbf{b}_j+sA(s,z),
\]
which gives \eqref{eq:alphaAxisAnalytic}. Next, observe that
\begin{equation}
 U_\rho=O(\rho),\qquad \chi_\rho=2\rho h(z)+O(\rho^3),\qquad \psi_\rho=2\rho k(z)+O(\rho^3),\qquad \Th_\rho=O(\rho^3),
\end{equation}
\begin{equation}
U_z=f'(z)+O(\rho^2),\qquad
\chi_z=O(\rho^2),\qquad
\psi_z=O(\rho^2),\qquad
\Th_z=O(\rho^4).
\end{equation}
These formulas, together with $e^{2u}=\rho^{-2}e^{2U}$ and $e^{4u}=\rho^{-4}e^{4U}$, inserted into
\eqref{eq:alphaquad1} produce
\[
 \alpha_\rho
 =\rho\left[-f'(z)^2+4e^{2f(z)}(h(z)^2+k(z)^2)\right]+O(\rho^3).
\]
Differentiating with respect to $\rho$ and evaluating at $\rho=0$ gives \eqref{lem:alphaNormal}.
\end{proof}


\subsection{Boundary rigidity}
The next result gives a local boundary rigidity statement for a certain class of functions.
This will be applied to the Weyl conformal factor in the next subsection, and is a central mechanism
in the strict inequality argument for the angle defects.

\begin{lemma}\label{thm:boundaryRigidity}
Let $a=a(\rho,z)$ be real analytic near $(0,0)$, is even in $\rho$, and suppose
that in a half neighborhood $\rho\geq0$, it holds that
\begin{equation}
 a\leq0,\qquad a(0,z)=0.
\end{equation}
Assume further that for $\rho>0$,
\begin{equation}\label{eq:boundaryIneq}
 a_{\rho\rho}+a_{zz}+\frac{|\grad a|}{\rho}\geq0.
\end{equation}
If
\begin{equation}
 a_{\rho\rho}(0,0)=0,
\end{equation}
then $a\equiv0$ in a neighborhood of the origin.
\end{lemma}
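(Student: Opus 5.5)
The plan is to expand in even powers of $\rho$ and show that every coefficient vanishes. By analyticity and evenness, near the origin
\[
 a(\rho,z)=\sum_{k\geq1}\rho^{2k}A_k(z),
\]
with $A_k$ real analytic; there is no $k=0$ term because $a(0,z)=0$. The hypotheses translate as follows: $A_1=\tfrac12 a_{\rho\rho}(0,\cdot)$ satisfies $A_1\leq0$ (since $a\leq0$), and $A_1(0)=0$ (since $a_{\rho\rho}(0,0)=0$). The argument has two parts. First show $A_1\equiv0$ near $0$. Then run an induction showing $A_k\equiv0$ for every $k\geq2$.

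\emph{Induction step (the easy part).} Suppose $A_1=\dots=A_{k-1}\equiv0$ near $0$, for some $k\geq2$, so that $a=\rho^{2k}A_k+O(\rho^{2k+2})$. Insert this into \eqref{eq:boundaryIneq}, divide by $\rho^{2k-2}$, and let $\rho\to0$ at a fixed $z$. The terms behave as follows:
\begin{itemize}
\item $a_{zz}$ is $O(\rho^{2k})$, so it drops out;
\item $a_{\rho\rho}$ contributes $2k(2k-1)A_k$;
\item $|\nabla a|/\rho$ contributes $2k|A_k|$.
\end{itemize}
The limit inequality is therefore
\[
 2k(2k-1)A_k+2k|A_k|\geq0.
\]
Since $2k-1>1$, this forces $A_k\geq0$. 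On the other hand, $a\leq0$ forces $A_k\leq0$. Hence $A_k\equiv0$ near $0$, and analyticity then gives $a\equiv0$.

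\emph{The first coefficient (the main obstacle).} Suppose $A_1\not\equiv0$. Then $A_1(z)=-cz^{2n}(1+o(1))$ with $c>0$ and $n\geq1$, and $A_1<0$ on a punctured neighborhood of $0$. At such $z\neq0$, expand $|\nabla a|/\rho$ to second order in $\rho$:
\[
 \frac{|\nabla a|}{\rho}=-2A_1-4\rho^2A_2+\rho^2\frac{A_1'^2}{4|A_1|}+O(\rho^4).
\]
In \eqref{eq:boundaryIneq} the zeroth-order terms $2A_1-2A_1$ cancel. The $\rho^2$ coefficient then gives
\[
 8A_2+A_1''+\frac{A_1'^2}{4|A_1|}\geq0\qquad(z\neq0).
\]
Substituting the leading behaviour of $A_1$ yields
\[
 8A_2(z)\geq cn(3n-2)z^{2n-2}(1+o(1)).
\]

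\emph{Closing the contradiction.} When $n=1$ this says $A_2(0)\geq c/8>0$. But restricting $a\leq0$ to $z=0$, where $A_1(0)=0$, gives $\rho^4A_2(0)+O(\rho^6)\leq0$, hence $A_2(0)\leq0$, a contradiction. When $n\geq2$, I would compare the competing terms along the curves $\rho=K|z|$ with $K$ large. There
\[
 \rho^2A_1+\rho^4A_2\gtrsim\bigl(-cK^2+c'K^4\bigr)z^{2n+2}>0,
\]
which contradicts $a\leq0$ provided the higher terms $\rho^{2j}A_j$ are negligible. Controlling those higher terms is the step I expect to be delicate, since the $A_j(0)$ need not vanish.

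To handle them, I would first try a weighted blow-up. Take the Newton polygon of $a$ in $(\rho^2,z)$ and rescale $\rho=\lambda^{p}\tilde\rho$, $z=\lambda^{q}\tilde z$ with weights chosen so that the lowest-order weighted-homogeneous part $P$ is nontrivial and involves both $A_1$ and some $A_j$. The hypotheses pass to $P$: it satisfies $P\leq0$, vanishes on $\tilde\rho=0$, and obeys a limiting version of \eqref{eq:boundaryIneq}. The coefficient relations above, applied to $P$ itself, would then give the contradiction.

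If that bookkeeping becomes unwieldy, my fallback is to reduce to the classical homogeneous blow-up. Write the lowest-degree Taylor polynomial as $P=r^mg(\theta)$. It satisfies $\Delta P+|\nabla P|/\rho\geq0$ and $P\leq0$. Evenness forces $m\geq3$, since the degree-$2$ part $c\rho^2$ vanishes by hypothesis. The task is then to rule out a nonpositive $g$ vanishing at $\theta=0,\pi$ that satisfies
\[
 g''+m^2g+\frac{\sqrt{m^2g^2+g'^2}}{\sin\theta}\geq0,
\]
using the same comparison, near the endpoints, as in the coefficient computation above.
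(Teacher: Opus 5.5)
\textbf{Verdict: there is a genuine gap.} Your proof covers only the cases where $A_1\equiv0$ or where $A_1$ vanishes to order exactly two at $z=0$.

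\textbf{What is correct.} Your induction step is correct: it shows that $A_1\equiv0$ forces $a\equiv0$, and it is the counterpart of the paper's step showing that the angular profile vanishes to order exactly two at $\theta=0,\pi$. Your treatment of the case where $A_1$ vanishes to order exactly two at $z=0$ is also correct.

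\textbf{The gap: the case $n\geq2$.} You leave this case open, and neither of your proposed routes closes it.

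In the weighted blow-up, your relation $8A_2\geq cn(3n-2)z^{2n-2}(1+o(1))$ forces $A_2=bz^{2l}(1+o(1))$ with $b>0$ and $l\leq n-1$. So in the Newton diagram (exponents $(k,l)$ of $\rho^{2k}z^l$), the edge leaving the vertex $(1,2n)$ has slope at most $-2$. The slope is exactly $-2$ precisely when every monomial of $a$ has total degree at least $2n+2$. In that case the weights are isotropic, $P$ is just the ordinary lowest-degree homogeneous part, and the full inequality $\Delta_{\rho,z}P+|\nabla P|/\rho\geq0$ survives. So this route reduces to your fallback, which is where the entire difficulty of the lemma sits.

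\textbf{Why endpoint analysis cannot finish the fallback.} Near $\theta=0,\pi$ the angular inequality only forces $g$ to vanish to order exactly two. That is consistent with the inequality, since $-\rho^2$ satisfies it with equality. Here is a concrete example. Take $P=-\rho^2z^4+b\rho^4z^2-\rho^6$ with $1<b<2$.
\begin{itemize}
\item It is nonpositive, even in $\rho$, and vanishes on the axis.
\item In thin sectors about both halves of the axis, $\Delta_{\rho,z}P+|\nabla P|/\rho=8(b-1)\rho^2z^2+O(\rho^4)>0$, so your second-order relation holds strictly there.
\item Yet at $z=0$ this quantity equals $(2b-24)\rho^4<0$.
\end{itemize}
The contradiction must therefore come from the inequality across the whole arc $(0,\pi)$.

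\textbf{The missing global argument.} The paper supplies it as follows.
\begin{enumerate}
\item With $p=-g$, it first rules out interior zeros, so $p>0$ on $(0,\pi)$.
\item The Pr\"ufer-type angle $\tan\varsigma=p'/(mp)$ turns the inequality into $\varsigma'+m\leq\cos\varsigma/\sin\theta$.
\item In the variable $t=\ln\tan(\theta/2)$, the function $y=\sin\varsigma$ satisfies $y'\leq1-y^2-m\operatorname{sech}t\sqrt{1-y^2}$.
\item Because $m>2$, the function $y_0=-\tanh t$ is a strict barrier.
\item Endpoint analysis does give order-two vanishing at both ends. Hence $y=1-\tfrac{m^2}{2}e^{2t}+o(e^{2t})$, while $y_0=1-2e^{2t}+o(e^{2t})$, as $t\to-\infty$, and mirror-image asymptotics hold as $t\to+\infty$.
\item These yield $y<y_0$ for all finite $t$, but $y>y_0$ for large $t$, which is a contradiction.
\end{enumerate}
You need an argument of this global kind to complete the proof.
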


\begin{proof}
For convenience we set $V=-a$. Observe that this function satisfies
\begin{equation}\label{eq:Vineq}
 V\geq0 \text{ for }\rho\geq 0,\qquad V(0,z)=0,
 \qquad
 \Dtwo V-\frac{|\grad V|}{\rho}\leq0 \text{ for }\rho>0.
\end{equation}
Because $V$ is analytic, is even in $\rho$, and vanishes on $\rho=0$, there is
an analytic even function $P$ such that
\begin{equation}\label{eq:Vfactor}
 V(\rho,z)=\rho^2 P(\rho,z).
\end{equation}
Moreover $P\geq0$ for $\rho\geq0$, and
$P(0,0)=\tfrac12V_{\rho\rho}(0,0)=0$ since $a_{\rho\rho}(0,0)=0$.

Assume, by way of contradiction, that $P$ is not identically zero.  Let $P_d$ be
the first nonzero homogeneous term in its Taylor expansion at the origin.
Then $d\geq1$, $P_d$ is even in $\rho$, and
\begin{equation}
 P_d(\rho,z)\geq0\qquad\text{for }\rho\geq0.
\end{equation}
Indeed, this follows by restricting $P$ to every ray and taking the first
nonzero coefficient.  The first homogeneous term of $V$ is
\[
 V_n(\rho,z)=\rho^2P_d(\rho,z),\qquad n=d+2>2.
\]
Passing to the leading homogeneous order in the differential inequality of \eqref{eq:Vineq} gives
\begin{equation}\label{eq:homIneqCartesian}
 \Dtwo V_n-\frac{|\grad V_n|}{\rho}\leq0
 \qquad\text{for } \rho>0.
\end{equation}
Next, introduce polar coordinates
\[
 \rho=r\sin\theta,
 \qquad z=r\cos\theta,
 \qquad 0<\theta<\pi,
\]
and write
\begin{equation}
 V_n=r^n p(\theta).
\end{equation}
Then $p\geq0$, and the factor $\rho^2$ shows that $p$ vanishes to order at
least two at both endpoints.  Equation \eqref{eq:homIneqCartesian} then becomes
\begin{equation}\label{eq:qIneq}
 p''+n^2p-\frac{\sqrt{n^2p^2+p'^2}}{\sin\theta}\leq0.
\end{equation}

We first show that $p>0$ on $(0,\pi)$.  On any component where $p>0$, define
$\varsigma\in(-\pi/2,\pi/2)$ by
\begin{equation}\label{eq:betaDef}
 \tan\varsigma=\frac{p'}{np}.
\end{equation}
A direct substitution into \eqref{eq:qIneq} yields
\begin{equation}\label{eq:betaIneq}
 \varsigma'+n\leq\frac{\cos\varsigma}{\sin\theta}.
\end{equation}
If $p$ had an interior zero at $\theta_0$, nonnegativity and analyticity would
give, on one side of the zero,
$p(\theta)=c|\theta-\theta_0|^m(1+o(1))$ with $m\geq2$ and $c>0$.  Approaching from the
right gives
\[
 \varsigma'\longrightarrow-\frac nm,
 \qquad
 \frac{\cos\varsigma}{\sin\theta}\longrightarrow0.
\]
Taking the limit in \eqref{eq:betaIneq} would imply
$n(1-1/m)\leq0$, which is impossible.  Hence $p>0$ in $(0,\pi)$. 

Let $m_-$ be the order of vanishing at $\theta=0$.  Then
$p(\theta)=c\theta^{m_-}(1+o(1))$, and \eqref{eq:betaDef} gives
\[
 \varsigma'\longrightarrow-\frac{n}{m_-},
 \qquad
 \frac{\cos\varsigma}{\sin\theta}\longrightarrow\frac{n}{m_-}.
\]
Equation \eqref{eq:betaIneq} therefore implies $m_-\leq2$.  Since the factor
$\rho^2$ gives $m_-\geq2$, one has $m_-=2$.  An identical argument at
$\theta=\pi$ shows that the upper endpoint order is also two.

Now, reparameterize $\theta\in(0,\pi)$ by $t\in (-\infty,\infty)$ and define a function $y(t)$ as follows
\begin{equation}
 t=\ln\tan\frac\theta2,
 \qquad \sin\theta=\operatorname{sech}t,
 \qquad y(t)=\sin\varsigma(\theta(t)).
\end{equation}
Then multiplying \eqref{eq:betaIneq} by
$\sin\theta\cos\varsigma$ produces
\begin{equation}\label{eq:yIneq}
 y'\leq1-y^2-n\operatorname{sech}t\sqrt{1-y^2},
\end{equation}
where $y'=\frac{dy}{dt}$. We will make a comparison with the function
\begin{equation}
 y_0(t)=-\tanh t.
\end{equation}
Observe that replacing $y$ with $y_0$ in the right-hand side of \eqref{eq:yIneq} yields
$(1-n)\operatorname{sech}^2t$, whereas
$y_0'=-\operatorname{sech}^2t$. Thus, since $n>2$ we obtain
\begin{equation}\label{eq:strictSubsolution}
 1-y_0^2-n\operatorname{sech}t\sqrt{1-y_0^2}<y_0'.
\end{equation}
Moreover, since the endpoint vanishing for $p$ is of order two it follows that
\begin{align}
 y(t)&=1-\frac{n^2}{2}e^{2t}+o(e^{2t})
 &&\text{as }t\to-\infty,\label{eq:yminus}\\
 y(t)&=-1+\frac{n^2}{2}e^{-2t}+o(e^{-2t})
 &&\text{as }t\to+\infty.\label{eq:yplus}
\end{align}
On the other hand,
\begin{align*}
 y_0(t)&=1-2e^{2t}+o(e^{2t})&&\text{as }t\to-\infty,\\
 y_0(t)&=-1+2e^{-2t}+o(e^{-2t})&&\text{as }t\to+\infty.
\end{align*}
Since $n>2$, \eqref{eq:yminus} gives $y<y_0$ for all sufficiently negative
$t$.  At a first finite time contact from below we would have
$(y-y_0)'\geq0$, while \eqref{eq:yIneq} and
\eqref{eq:strictSubsolution} give $(y-y_0)'<0$.  Therefore $y<y_0$ for every
finite $t$. However, \eqref{eq:yplus} ensures that $y>y_0$ for all sufficiently positive
$t$, yielding a contradiction.

We conclude that no nonzero first homogeneous term $P_d$ exists in the Taylor expansion of $P$.  Real analyticity then implies
that $P\equiv0$ near the origin, and hence 
$a\equiv0$ in the same neighborhood.
\end{proof}

\subsection{Rigidity at a maximal axis rod}
We will now apply Lemma \ref{thm:boundaryRigidity} at an axis rod having the largest logarithmic angle defect.

\begin{proposition}\label{prop:maxRodRigidity}
Let $M$ be defined by \eqref{eq:defectMaximum}.  If a bounded axis rod
$\Gamma_j$ satisfies
\begin{equation}\label{eq:maxRod}
 \mathbf{b}_j=M,
\end{equation}
then
\begin{equation}\label{eq:alphaZero}
 \alpha\equiv0 \qquad \text{ throughout the orbit space half-plane},
\end{equation}
and in particular $M=0$.
\end{proposition}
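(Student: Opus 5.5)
The plan is to locate an interior maximum of $U$ along the bounded rod $\Gamma_j$. At that point the normal second derivative of $\alpha$ is pinned to zero, and Lemma~\ref{thm:boundaryRigidity} together with analytic continuation then spreads $\alpha\equiv M$ to the whole half-plane. The normalization at infinity finally forces $M=0$.

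\emph{Step 1: $U|_{\Gamma_j}$ has an interior maximum.} Write $\Gamma_j=(z_{j-1}^+,z_j^-)$ and $f(z)=U(0,z)$, which is analytic on the open rod by \textup{(A1)}. I claim $f\to-\infty$ at both endpoints, and I check the two types of endpoint separately.
\begin{itemize}
\item If the endpoint is a pole of a nondegenerate horizon, this is \eqref{eq:poleU}: $U(0,z)=\tfrac12\ln|z-z^\pm|+O(1)$.
\item If the endpoint is a degenerate puncture $p_i$, condition \textup{(A3)} gives $U=\ln r_i+\bar U_i(\theta_i)+o(1)$ uniformly in $\theta_i$, with $\bar U_i\in C^{3,\gamma}(\mathbb S^2)$ bounded. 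Hence $U(0,z)=\ln|z-z_i|+O(1)\to-\infty$ along the axis.
\end{itemize}
So $f$ attains its maximum at some interior $z_0\in\Gamma_j$, where $f'(z_0)=0$.

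\emph{Step 2: $\alpha_{\rho\rho}(0,z_0)=0$.} Take a compact subinterval $I\ni z_0$ and apply Proposition~\ref{prop:axisFactor} after the Heisenberg normalization \eqref{eq:axisGauge}; this normalization leaves $\alpha$ unchanged. We get $\alpha=M+\rho^2A(\rho^2,z)$ near $I$, and by \eqref{lem:alphaNormal}
\[
\alpha_{\rho\rho}(0,z_0)=4e^{2f(z_0)}\bigl(h(z_0)^2+k(z_0)^2\bigr)\geq0 .
\]
On the other hand, Proposition~\ref{prop:alphaUpper} gives $\alpha\leq M$ for $\rho>0$. Hence $A(0,z_0)\leq0$, that is, $\alpha_{\rho\rho}(0,z_0)=2A(0,z_0)\leq0$. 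Combining the two, $\alpha_{\rho\rho}(0,z_0)=0$. As a side effect, $h(z_0)=k(z_0)=0$.

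\emph{Step 3: local rigidity.} Set $a(\rho,z)=\alpha(\rho,z+z_0)-M$. This function has the following properties.
\begin{itemize}
\item It is real analytic and even in $\rho$ near the origin, by \eqref{eq:alphaAxisAnalytic}.
\item It satisfies $a\leq0$ for $\rho\geq0$, by Proposition~\ref{prop:alphaUpper}, and $a(0,z)=0$.
\item It satisfies \eqref{eq:boundaryIneq} for $\rho>0$, by Proposition~\ref{prop:alphaIneq}, since the term $2\cT$ there is nonnegative.
\item It satisfies $a_{\rho\rho}(0,0)=0$, by Step 2.
\end{itemize}
Lemma~\ref{thm:boundaryRigidity} therefore yields $\alpha\equiv M$ on a neighborhood of $(0,z_0)$, and in particular on an open subset of $\{\rho>0\}$.

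\emph{Step 4: globalization.} Since $\Phi$, and hence $\alpha$, is real analytic on the connected open half-plane $\{\rho>0\}$, unique continuation gives $\alpha\equiv M$ there. Letting $r\to\infty$ and using \eqref{eq:infinity} gives $M=0$, and therefore $\alpha\equiv0$.

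The main obstacle I expect is Step 1 at degenerate endpoints. There one must make sure that the uniform tangent-map convergence \eqref{eq:tangentUConvergence}, with $\bar U_i$ bounded on the closed sphere, really controls the restriction of $U$ to the axis, so that $f\to-\infty$. Everything else is a direct assembly of the earlier results.
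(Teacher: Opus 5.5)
Your proposal is correct and follows the paper's proof essentially step for step: an interior maximum of $f=U(0,\cdot)$ on $\Gamma_j$ (via \eqref{eq:poleU} and \eqref{eq:tangentUConvergence}), then the squeeze $0\le 4e^{2f(z_0)}\bigl(h(z_0)^2+k(z_0)^2\bigr)=\alpha_{\rho\rho}(0,z_0)\le 0$ from \eqref{lem:alphaNormal} and Proposition~\ref{prop:alphaUpper}, followed by Lemma~\ref{thm:boundaryRigidity}, analytic continuation, and \eqref{eq:infinity} to force $M=0$. The degenerate-endpoint concern you flag is resolved exactly as you suggest: $\bar U_i$ is bounded on the closed sphere, and the convergence in \eqref{eq:tangentUConvergence} is uniform up to $\theta_i=0,\pi$, so $U(0,z)=\ln|z-z_i|+O(1)\to-\infty$ along the axis.
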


\begin{proof}
We first observe that the axis trace $f(z)=U(0,z)$ on $\Gamma_j$, achieves an interior maximum.
This is due to the fact that $f$ is continuous in the open interval and tends to $-\infty$ at both endpoints.
Indeed, this asymptotic behavior follows from \eqref{eq:poleU} for an endpoint belonging to a nondegenerate horizon, and follows from
\eqref{eq:tangentUConvergence} for an endpoint that is a degenerate horizon puncture.  
We may therefore choose an interior maximum point and translate it to $z=0$, so that $f'(0)=0$. We will also choose the target space
gauge \eqref{eq:axisGauge}.

Set
\[
 a=\alpha-M.
\]
By Proposition~\ref{prop:alphaUpper}, $a\leq0$ in the open half-plane.  By \eqref{eq:alphaAxisAnalytic}
and \eqref{eq:maxRod}, $a(0,z)=a_{\rho}(0,z)=0$ near the chosen
point, and since $\rho\mapsto a(\rho,0)$ has a one-sided maximum at $\rho=0$ it follows that
\begin{equation}
 \alpha_{\rho\rho}(0,0)=a_{\rho\rho}(0,0)\leq0.
\end{equation}
On the other hand, \eqref{lem:alphaNormal} and $f'(0)=0$ give
\[
 \alpha_{\rho\rho}(0,0)
 =4e^{2f(0)}\bigl(h(0)^2+k(0)^2\bigr)\geq0.
\]
Therefore
\begin{equation}
 h(0)=k(0)=0,
 \qquad
 a_{\rho\rho}(0,0)=0.
\end{equation}

By Proposition~\ref{prop:axisFactor}, $a$ is a real-analytic function of $(\rho^2,z)$ near
the axis, and satisfies
\[
 a_{\rho\rho}+a_{zz}+\frac{|\nabla a|}{\rho}\geq0
\]
by Proposition~\ref{prop:alphaIneq}.  The boundary rigidity Lemma
\ref{thm:boundaryRigidity} then applies and gives $a\equiv0$ near the chosen point.
Since $a$ is real analytic on the connected open orbit space half plane, analytic
continuation yields
\[
 \alpha\equiv M.
\]
Lastly, the asymptotic normalization \eqref{eq:infinity} forces $M=0$.
\end{proof}

\section{Proof of the Main Theorem}\label{sec:equality}

The main theorem involves a dichotomy statement, between the balanced and unbalanced classes of solutions. 
These two classes are characterized by the vanishing or nonvanishing of $\alpha$, respectively.

\subsection{The Majumdar--Papapetrou case}
In this subsection we will classify all solutions for which $\alpha$ vanishes.

\begin{proposition}\label{prop:conformastatic}
Assume that $\alpha\equiv0$ on the orbit space half-plane.  Then the following global properties of the harmonic map and its associated electrovacuum spacetime hold.
\begin{enumerate}[label=\textup{(\roman*)},leftmargin=2.5em]
\item The twist 1-form is trivial, and the two remaining pieces of the renormalized energy density agree
\begin{equation}\label{eq:TzeroAE}
 \Th\equiv0,
 \qquad
 |\grad U|^2=e^{2u}\bigl(|\grad\chi|^2+|\grad\psi|^2\bigr).
\end{equation}

\item
There is a constant unit vector
$\eta_0\in S^1\subset\R^2$ such that, with $\mathcal{U}=e^{-U}$,
one has
\begin{equation}\label{eq:pHarmonicZ}
 \Delta \mathcal{U}=0,
 \qquad
 d(\chi,\psi)=\eta_0\,\rho\starTwo d\mathcal{U}.
\end{equation}

\item The spacetime metric \eqref{eq:WP} takes the form
\begin{equation}\label{eq:conformastaticMetric}
 \mathbf g=-\mathcal{U}^{-2}dt^2+\mathcal{U}^2
 \bigl(d\rho^2+d z^2+\rho^2 d\phi^2\bigr).
\end{equation}

\item
After a Heisenberg translation, the twist potential vanishes $v\equiv 0$, and the Maxwell field is a constant duality rotation of
$d(\mathcal{U}^{-1}d t)$, in particular
\[
 d\chi=\cos\gamma\,\rho\starTwo d \mathcal{U},
 \qquad
 d\psi=\sin\gamma\,\rho\starTwo d \mathcal{U},
\]
for some constant $\gamma\in[0,2\pi)$.
\end{enumerate}
\end{proposition}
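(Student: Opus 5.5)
Throughout I work on the open orbit half-plane and use real analyticity of $\Phi$ and $\alpha$ there. The plan is to read off pointwise identities from the equality case of Lemma~\ref{lem:alphaIdentity} and Proposition~\ref{prop:alphaIneq}, integrate them, and then reconstruct the metric and the Maxwell field.

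\emph{Step 1: part (i).} Since $\alpha\equiv0$, we have $\nabla\alpha=0$ and $\Dtwo\alpha=0$. Proposition~\ref{prop:alphaIneq} then gives $0\geq 2\cT$, so $\Th\equiv0$. Lemma~\ref{lem:alphaIdentity} becomes $0=-\cA+\cE$, which is \eqref{eq:TzeroAE}.

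\emph{Step 2: harmonicity of $\mathcal U$.} Set $\mathcal U=e^{-U}$, so that $\Delta\mathcal U=e^{-U}(|\grad U|^2-\Delta U)$. By \eqref{eq:Uequation} with $\cT=0$ we have $\Delta U=\cE=\cA$, and therefore $\Delta\mathcal U=0$.

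\emph{Step 3: the sharp complex-square identity.} With $\zeta_0=0$, \eqref{eq:complexSquareExact} gives
\[
\zeta_U^2+\zeta_1^2+\zeta_2^2=0,\qquad |\zeta_U|^2=|\zeta_1|^2+|\zeta_2|^2 .
\]
Hence $|\zeta_1^2+\zeta_2^2|=|\zeta_1|^2+|\zeta_2|^2$, which is the equality case of the triangle inequality. So $\zeta_1^2$ and $\zeta_2^2$ are nonnegative multiples of a common complex number. It follows that $(\zeta_1,\zeta_2)=\zeta\,(\eta_1,\eta_2)$ for a real unit vector $\eta=(\eta_1,\eta_2)$, which is only determined pointwise at this stage. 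Then $\zeta_U^2=-\zeta^2$ gives $\zeta=\pm i\zeta_U$. Translating back, $e^{u}(d\chi,d\psi)=\pm\eta\,\starTwo dU$. Since $e^{-u}dU=\rho e^{-U}dU=-\rho\, d\mathcal U$, this becomes
\[
d(\chi,\psi)=\eta\,\rho\starTwo d\mathcal U,
\]
after absorbing the sign into $\eta$. This holds on the set where $\grad U\neq0$. I expect $\grad\mathcal U\not\equiv0$: otherwise $\cA=\cE=0$, $U$ is constant and hence $0$ by (A4), which contradicts \eqref{eq:poleU}. So by analyticity this set is open and dense.

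\emph{Step 4: constancy of $\eta$ (main obstacle).} The delicate point is that $\eta$ is a priori only a measurable or pointwise choice. Near any point where $\grad\mathcal U\neq0$ it is analytic, since it equals $\rho^{-1}$ times the ratio of $d(\chi,\psi)$ to $\starTwo d\mathcal U$. There I use two conditions.
\begin{enumerate}[label=\textup{(\alph*)},leftmargin=2.5em]
\item Closedness of $d\chi$ and $d\psi$, together with $d(\rho\starTwo d\mathcal U)=0$ (the axisymmetric Laplace equation), gives $d\eta\wedge\starTwo d\mathcal U=0$, that is, $\grad\eta\perp\grad\mathcal U$.
\item The Maxwell equations with $\Th=0$ read $d(\rho\starTwo e^{2u}d\chi)=0$, and similarly for $\psi$. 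Substituting Step 3, $\rho\starTwo e^{2u}d\chi$ becomes a multiple of $\eta_1\,d(e^{U})$ up to sign, so $d\eta\wedge d\mathcal U=0$, that is, $\grad\eta\parallel\grad\mathcal U$.
\end{enumerate}
Together these force $\grad\eta=0$ on a dense open set. Continuity of $d(\chi,\psi)$, connectedness, and analyticity then give a single constant $\eta_0=(\cos\gamma,\sin\gamma)$ valid everywhere, which proves (ii). The components where $\grad\mathcal U\neq0$ might be disconnected; I would handle this by noting that $d(\chi,\psi)$ is analytic and that the relation in Step 3 propagates across the analytic zero set of $\grad\mathcal U$.

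\emph{Step 5: parts (iii) and (iv).} Since $\Th=0$, \eqref{eq:wquad} gives $dw=0$, and the normalization at infinity gives $w\equiv0$. Inserting $\alpha=0$ and $e^{2U}=\mathcal U^{-2}$ into \eqref{eq:WP} yields \eqref{eq:conformastaticMetric}. For (iv), by (ii) and a Heisenberg translation I can write $(\chi,\psi)=\eta_0\,\varphi$ for a scalar potential $\varphi$ with $d\varphi=\rho\starTwo d\mathcal U$. Then $\psi\,d\chi-\chi\,d\psi=0$, so $\Th=0$ gives $dv=0$, and a further translation makes $v\equiv0$. Finally, substituting into \eqref{recem} and \eqref{axialcon}, and using that $\eta_0$ is constant, exhibits $\mathbf F$ as the duality rotation by angle $\gamma$ of $d(\mathcal U^{-1}dt)$, up to sign conventions absorbed into $\gamma$.
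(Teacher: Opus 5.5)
Your proposal is correct and follows essentially the same route as the paper. Your equality case in the complex triangle inequality is exactly the paper's Cauchy--Schwarz equality $|X|=|U_z|$, $|Y|=|U_\rho|$, $X\cdot Y=-U_\rho U_z$ for $X=e^u(\chi_\rho,\psi_\rho)$, $Y=e^u(\chi_z,\psi_z)$, and the rest matches as well: your conditions (a) and (b), the analytic continuation of $dZ-\eta_0\,\rho\starTwo d\mathcal U$ from one component of $\{d\mathcal U\neq0\}$, and the Heisenberg translation plus duality-rotation step. One small correction: when every horizon is degenerate, nonconstancy of $U$ should come from \eqref{eq:tangentUConvergence} (which gives $U\to-\infty$ at the punctures), not from \eqref{eq:poleU}, which only applies at nondegenerate poles.
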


\begin{proof}
Since $\alpha$ is constant, \eqref{eq:alphaIneq} implies that $\cT=0$, and hence
$\Th=0$. According to \eqref{eq:alphaIdentity} we then have $\cA=\cE$, which yields 
\eqref{eq:TzeroAE}. Next, observe that the two quadrature equations \eqref{eq:alphaquad1}--\eqref{eq:alphaquad2}, with
$\alpha=0$ and $\Th=0$, give
\begin{align}
 U_\rho^2-U_z^2
 +e^{2u}\bigl(\chi_\rho^2-\chi_z^2+
 \psi_\rho^2-\psi_z^2\bigr)&=0,
 \label{eq:stress1}\\
 U_\rho U_z+e^{2u}\bigl(\chi_\rho\chi_z+
 \psi_\rho\psi_z\bigr)&=0.
 \label{eq:stress2}
\end{align}
Set
\[
 X=e^u(\chi_\rho,\psi_\rho),
 \qquad
 Y=e^u(\chi_z,\psi_z),
 \qquad a=U_\rho,
 \qquad b=U_z.
\]
Then combining \eqref{eq:TzeroAE}, \eqref{eq:stress1}, and
\eqref{eq:stress2} produces
\begin{equation}
 |X|^2=b^2,
 \qquad |Y|^2=a^2,
 \qquad X\cdot Y=-ab.
\end{equation}
Thus, equality holds in Cauchy--Schwarz, and it follows that on any open set where $\grad U\neq0$,
there is a unit vector field $\eta$ such that
\begin{equation}\label{eq:XYeta}
 X=b\eta,
 \qquad Y=-a\eta,\qquad \eta=\frac{bX-aY}{a^2 +b^2}.
\end{equation}
Since $e^{-u}=\rho e^{-U}=\rho \mathcal{U}$, equations \eqref{eq:XYeta} may be rewritten as
\begin{equation}\label{eq:dZomega}
 d Z=\eta\,\omega,
 \qquad
 Z=(\chi,\psi),
 \qquad
 \omega=\rho(-\mathcal{U}_z d\rho+\mathcal{U}_\rho d z)
       =\rho\starTwo d \mathcal{U}.
\end{equation}

From \eqref{eq:Uequation} and \eqref{eq:TzeroAE},
$\Delta U=|\grad U|^2$, and hence
\begin{equation}
 \Delta \mathcal{U}=e^{-U}\bigl(-\Delta U+|\grad U|^2\bigr)=0.
\end{equation}
Moreover, a direct calculation shows that
\begin{equation}
 d\omega=\rho(\Delta \mathcal{U})\,d\rho\wedge d z=0,
\end{equation}
and taking an exterior derivative in \eqref{eq:dZomega} gives
\begin{equation}\label{eq:detaomega}
 d\eta\wedge\omega=0.
\end{equation}
When $\Th=0$, the third and fourth equations of \eqref{eq:harmu} become
\begin{equation}
 \operatorname{div}(e^{2u}\grad Z)=0 \qquad \Longleftrightarrow \qquad
 d\bigl(\rho e^{2u}\starTwo d Z\bigr)=0.
\end{equation}
Using \eqref{eq:dZomega}, $\rho e^{2u}=1/(\rho \mathcal{U}^2)$, and
$\starTwo^2=-1$ on 1-forms, we obtain
\[
 \rho\, e^{2u}\starTwo d Z
 =-\eta\, \mathcal{U}^{-2}d \mathcal{U}=\eta\,d(\mathcal{U}^{-1}),
\]
and thus
\begin{equation}\label{eq:detadp}
 d\eta\wedge d(\mathcal{U}^{-1})=0.
\end{equation}
Where $d \mathcal{U}\neq0$, the 1-forms $d \mathcal{U}$ and
$\omega=\rho\starTwo d \mathcal{U}$ are linearly independent.  Equations
\eqref{eq:detaomega} and \eqref{eq:detadp} then imply that $d\eta=0$ on such a region.  Choose
a nonempty connected component of the set on which $d \mathcal{U}\neq0$; note that $\mathcal{U}$ is not globally constant because the
solution contains a horizon.  On this set, $\eta=\eta_0$ is constant and
\eqref{eq:dZomega} holds with $\eta_0$. Since $dZ-\eta_0 \omega=0$ on an open set, and the left-hand side is an analytic 1-form,
it follows that this holds globally on the orbit space half-plane, which yields
\eqref{eq:pHarmonicZ}.

To establish (iii), note that the
quadrature equation \eqref{eq:wquad} gives $d w=0$; asymptotic flatness fixes $w=0$.
Substituting into \eqref{eq:WP} then produces \eqref{eq:conformastaticMetric}.

We will now prove (iv). Let $\eta_0^\perp=(-\eta_0^2,\eta_0^1)$. Observe that, throughout the half-plane,
we then have
\[
 d(\eta_0^\perp\cdot Z)
 =\eta_0^\perp\cdot d Z
 =(\eta_0^\perp\cdot\eta_0)\omega=0.
\]
Hence
\[
 \eta_0^\perp\cdot Z=c_\perp
\]
for some constant $c_\perp$, and it follows that
\[
 Z=q\eta_0+c_\perp\eta_0^\perp,
 \qquad q=\eta_0\cdot Z.
\]
Therefore, the image of $Z$ is contained in an affine line parallel to
$\eta_0$. Choose the constant Heisenberg translation \eqref{eq:Heisenberg} with
$(a_2,a_3)=-c_\perp\eta_0^\perp$,
namely
\[
 \widetilde\chi=\chi+a_2,\qquad
 \widetilde\psi=\psi+a_3,\qquad
 \widetilde v=v+a_3\chi-a_2\psi+a_1.
\]
This transformation preserves
\[
 \Theta=d v-\psi\,d\chi+\chi\,d\psi,
\]
and for the translated electromagnetic potential
$\widetilde Z=(\widetilde\chi,\widetilde\psi)$ one has
\[
 \widetilde Z=q\eta_0.
\]
Therefore
\[
 -\widetilde\psi\,d\widetilde\chi
 +\widetilde\chi\,d\widetilde\psi=0.
\]
Since $\widetilde\Theta=\Theta=0$, it follows that
$d\widetilde v=0$.  The remaining translation parameter
$a_1$ may then be chosen so that $\widetilde v=0$. In what follows, we will denote $\widetilde{v}$, $\widetilde{\chi}$, $\widetilde{\psi}$ by $v$, $\chi$, $\psi$.

Define
\[
 \mathbf F_0=d(\mathcal{U}^{-1}d t)
             =\mathcal{U}^{-2}d t\wedge d \mathcal{U},
\]
and observe that a direct calculation in the metric
\eqref{eq:conformastaticMetric} produces
\[
 {*}_{\mathbf g}\mathbf F_0
 =\rho\,d\phi\wedge\starTwo d \mathcal{U}.
\]
It follows that
\[
\iota_\eta {*}_{\mathbf g}\mathbf F_0
 =\rho\starTwo d \mathcal{U},\qquad
 \iota_\eta\mathbf F_0=0,\qquad \eta=\partial_{\phi}.
\]
Since ${*}_{\mathbf g}^2=-1$ on 2-forms, the axial potential pairs
associated with $\mathbf F_0$ are
\[
 (d\chi_0,d\psi_0)
 =(\rho\starTwo d \mathcal{U},0).
\]
Note that since $\mathcal{U}$ is harmonic, $\mathbf{F}_0$ satisfies the source-free Maxwell equations $d\mathbf{F}_0=d{*}_{\mathbf g}\mathbf F_0=0$.
Now write the constant unit vector in \eqref{eq:pHarmonicZ} as
$\eta_0=(\cos\gamma,\sin\gamma)$.
Then
\begin{equation}\label{uhb}
 d\chi=\cos\gamma\,\rho\starTwo d \mathcal{U}=\iota_{\eta} {*}_{\mathbf g}\mathbf F,
 \qquad
 d\psi=\sin\gamma\,\rho\starTwo d \mathcal{U}=\iota_{\eta}\mathbf F,
\end{equation}
where $\mathbf{F}$ is the following duality rotation of $\mathbf{F}_0$,
\[
 \mathbf F
 =\cos\gamma\,\mathbf F_0
  +\sin\gamma\,{*}_{\mathbf g}\mathbf F_0.
\]
Indeed, away from the rotation axis, a spacetime $2$-form $\mathbf{F}$ is uniquely determined
by the pair $(\iota_{\eta} {*}_{\mathbf g}\mathbf F,\iota_{\eta}\mathbf F)$, through the formula
\[
 \mathbf F
 =|\eta|_{\mathbf g}^{-2}
 \left[
  \eta^\flat\wedge\iota_\eta\mathbf F
  -{*}_{\mathbf g}
   \bigl(\eta^\flat\wedge
         \iota_\eta {*}_{\mathbf g}\mathbf F\bigr)
 \right].\qedhere
\]
\end{proof}

To complete the classification of solutions with $\alpha=0$, we next find the explicit form for $\mathcal{U}$
and show that nondegenerate horizons cannot occur.

\begin{proposition}\label{lem:noNondegenerate}
Assume that $\alpha\equiv0$ on the orbit space half-plane. Then there are no nondegenerate horizon
rods. Moreover, if $p_1,\ldots,p_N$ denote the degenerate horizon punctures, then
the positive harmonic function in Proposition~\ref{prop:conformastatic} has the form
\begin{equation}
\mathcal{U}=\mathcal{U}_{mp}:=1+\sum_{i=1}^{N}\frac{m_i}{|\mathbf{x}-p_i|}
\end{equation}
for some constants $m_i>0$.
\end{proposition}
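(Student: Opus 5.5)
Given $\alpha\equiv0$, Proposition~\ref{prop:conformastatic} says that $\mathcal{U}=e^{-U}$ is a positive harmonic function on $\R^3\setminus\Gamma$ that is axisymmetric. The plan is to find the possible singularities of $\mathcal{U}$ on $\Gamma$ from the asymptotic conditions (A1)--(A4), and then apply a Bôcher-type removable singularity and Liouville argument.

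\textbf{Step 1: no nondegenerate rods.} Suppose $\mathcal H_i=[z_i^-,z_i^+]$ is a nondegenerate rod. The pole expansion \eqref{eq:polealphaLower} contains the term $\ln\sin(\theta_i^-/2)$, which is unbounded as $\theta_i^-\to0$ and cannot be identically zero. This already contradicts $\alpha\equiv0$. Equivalently, the horizon expansion \eqref{eq:horizonexpansion} gives $\alpha=\ln\rho+O(1)\to-\infty$ on the rod interior, which also contradicts $\alpha\equiv0$. So every horizon component is a puncture $p_i$. Because $N>1$ and $\alpha\equiv 0$ forces $\mathbf b_j=0$, there is no conflict with the hypotheses on degenerate horizons.

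\textbf{Step 2: behaviour on axis rods and at infinity.} On each axis rod, (A1) says $U=F(\rho^2,z)$ with $F$ real analytic. Hence $\mathcal U=e^{-F}$ is bounded and smooth across the interior of every $\Gamma_j$, and it is positive there. A bounded harmonic function on a punctured neighbourhood of a line segment (codimension two) extends harmonically across it. Consequently $\mathcal U$ is harmonic on $\R^3\setminus\{p_1,\dots,p_N\}$. At infinity, (A4) gives $U=O(r^{-1})$, so $\mathcal U\to1$.

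\textbf{Step 3: the punctures.} Near $p_i$, condition \eqref{eq:tangentUConvergence} gives
\[
\mathcal U=e^{-U}=r_i^{-1}e^{-\bar U_i(\theta_i)}(1+o(1)).
\]
$\mathcal U$ is positive and harmonic on a punctured ball, so Bôcher's theorem gives $\mathcal U=m_i|\mathbf x-p_i|^{-1}+h_i$ with $m_i\ge0$ and $h_i$ harmonic across $p_i$. To show $m_i>0$: if $m_i=0$, then $\mathcal U$ would be bounded near $p_i$. This contradicts the blow-up of order $r_i^{-1}$ implied by \eqref{eq:tangentUConvergence}, because $e^{-\bar U_i}$ is finite for interior $\theta_i$. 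Alternatively, $m_i>0$ follows because $\bar U_i$ is bounded at interior angles, so the $r_i^{-1}$ rate is genuinely attained.

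\textbf{Step 4: conclusion.} The function $\mathcal U-\sum_i m_i|\mathbf x-p_i|^{-1}$ is harmonic on all of $\R^3$ and tends to $1$ at infinity. By Liouville's theorem it equals $1$, which gives $\mathcal U=\mathcal U_{mp}$.

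The main obstacle is Step 2: justifying that $\mathcal U$ is harmonic across the axis rods in $\R^3$, and not merely in the orbit half-plane. Analyticity in $\rho^2$ makes $\mathcal U$ a smooth axisymmetric function on $\R^3$, and $\Delta\mathcal U=0$ off $\Gamma$ then extends by continuity. A secondary obstacle is the precise use of (A3) to get $m_i>0$ rather than just $m_i\ge0$.
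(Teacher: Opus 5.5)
Your proof is correct. Steps 2--4 are the paper's argument: axis analyticity lets $\mathcal U$ extend harmonically across the rod interiors, Bôcher's theorem applies at each puncture, $m_i>0$ follows from the $r_i^{-1}$ blow-up given by \eqref{eq:tangentUConvergence}, and Liouville's theorem finishes.

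The difference is in Step 1. The paper rules out nondegenerate rods using the conformastatic structure. It writes $\mathcal U=e^{-u_0(z)}\rho^{-1}+O_2(\rho)$, using the $u$-expansion in \eqref{eq:horizonexpansion}. It then computes $\Delta\mathcal U=e^{-u_0}\rho^{-3}+O(\rho^{-1})$, which is nonzero for small $\rho$ and so contradicts harmonicity. You instead read the contradiction directly off the $\alpha$-part of (A2), or off the pole expansion: $\alpha=\ln\rho+O_2(1)$ is unbounded below, so it cannot vanish identically.

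Your route is shorter and does not need Proposition~\ref{prop:conformastatic} at all. The paper's route uses only the harmonic-map asymptotics of $u$. It shows that harmonicity of $e^{-U}$ is by itself incompatible with the $\rho^{-1}$ blow-up at a nondegenerate horizon. So it does not rely on the $\alpha$-asymptotics, which Definition~\ref{def:blackHoleData} lists but which are derived from the map asymptotics through the quadratures.
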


\begin{proof}
Assume that a nondegenerate horizon rod exists. On the interior of this rod, \eqref{eq:horizonexpansion} gives
$u=u_0(z)+O_2(\rho^2)$ with $u_0$ finite.  Hence
\[
 \mathcal{U}=e^{-U}=\frac{e^{-u_0(z)}}{\rho}+O_2(\rho),
\]
and thus the leading terms in the axisymmetric Laplacian
are
\[
 \Delta \mathcal{U}=\mathcal{U}_{\rho\rho}+\mathcal{U}_{zz}+\frac1\rho \mathcal{U}_\rho
 =\frac{e^{-u_0(z)}}{\rho^3}+O(\rho^{-1}).
\]
This contradicts $\Delta \mathcal{U}=0$ for $\rho>0$ sufficiently small. We conclude that nondegenerate horizons
cannot occur.

Now let $p_1,\ldots,p_N$ denote the collection of degenerate horizon punctures. Observe that axis analyticity implies that $\mathcal{U}$ extends harmonically across the
axis away from the punctures.  Therefore
\begin{equation} 
 \mathcal{U}>0,
 \qquad
 \Delta \mathcal{U}=0
 \quad\text{on }\R^3\setminus\{p_1,\ldots,p_N\},
 \qquad
 \mathcal{U}\longrightarrow1\quad\text{at infinity}.
\end{equation}
By B\^ocher's theorem \cite{AxlerBourdonRamey}, near each puncture
\begin{equation}
 \mathcal{U}(x)=\frac{m_i}{|\mathbf{x}-p_i|}+h_i(\mathbf{x}),
\end{equation}
where $m_i\geq0$ and $h_i$ is harmonic across $p_i$.  The uniform puncture expansion \eqref{eq:tangentUConvergence} gives
$\mathcal{U}=e^{-U}\to+\infty$ as $\mathbf{x}\to p_i$; hence $m_i>0$.
Moreover, the function
\[
 h(\mathbf x)=\mathcal{U}(\mathbf x)-\sum_{i=1}^N\frac{m_i}{|\mathbf x-p_i|}
\]
extends to a harmonic function on all of $\R^3$, and tends to $1$ at infinity.
It is bounded, so Liouville's theorem implies $h\equiv1$.
\end{proof}

An immediate consequence of the previous two propositions is the following classification.

\begin{corollary}\label{cor:alphaZeroMP}
Let $\Phi=(u,v,\chi,\psi)$ be a stationary electrovacuum singular
harmonic map with black hole rod data in the sense of
Definition~\ref{def:blackHoleData}, and let $\alpha$ be reconstructed
with the asymptotically flat normalization.  If $\alpha\equiv0$, then
all horizon components are degenerate punctures, and the associated spacetime metric
takes the Majumdar-Papapetrou form \eqref{mp000}, with $\mathcal{U}_{mp}=e^{-U}$.
Moreover, after a Heisenberg translation, there is a constant
$\gamma\in[0,2\pi)$ such that
\[
 v=0,\qquad
 d\chi=\cos\gamma\,\rho\starTwo d\mathcal{U}_{mp},\qquad
 d\psi=\sin\gamma\,\rho\starTwo d\mathcal{U}_{mp}.
\]
In particular, the reconstructed Maxwell field is that of a constant
duality rotation of the collinear Majumdar--Papapetrou solution.
\end{corollary}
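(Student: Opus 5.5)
The corollary follows by chaining Propositions~\ref{prop:conformastatic} and~\ref{lem:noNondegenerate}; the only real work is bookkeeping of gauges and normalizations.

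The argument proceeds in four steps.

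\emph{Step 1: structure of $\Phi$ and the metric.} Assume $\alpha\equiv0$ with the asymptotically flat normalization of \textup{(A4)}. Proposition~\ref{prop:conformastatic} then yields the following:
\begin{itemize}
\item $\Th\equiv0$;
\item $\mathcal U=e^{-U}$ is harmonic;
\item $w\equiv0$, because $dw=0$ and $w$ is fixed by its normalization at infinity;
\item the metric takes the conformastatic form \eqref{eq:conformastaticMetric};
\item after a Heisenberg translation \eqref{eq:Heisenberg}, $v\equiv0$ and $d(\chi,\psi)=(\cos\gamma,\sin\gamma)\,\rho\starTwo d\mathcal U$ for some constant $\gamma$.
\end{itemize}

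\emph{Step 2: horizons and the form of $\mathcal U$.} Proposition~\ref{lem:noNondegenerate} rules out nondegenerate horizon rods. Hence every horizon component is a puncture $p_i$, and $\mathcal U=\mathcal U_{mp}$ with $m_i>0$.

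\emph{Step 3: identification with \eqref{mp000}.} Substitute $\mathcal U=\mathcal U_{mp}$ into \eqref{eq:conformastaticMetric}. Since $d\rho^2+dz^2+\rho^2d\phi^2$ is the flat metric $d\mathbf x^2$ written in cylindrical coordinates, the result is exactly \eqref{mp000}.

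\emph{Step 4: the Maxwell field.} The field is identified through the last part of Proposition~\ref{prop:conformastatic}. There $\mathbf F=\cos\gamma\,\mathbf F_0+\sin\gamma\,{*}_{\mathbf g}\mathbf F_0$ with $\mathbf F_0=d(\mathcal U_{mp}^{-1}dt)$. This is the standard Majumdar--Papapetrou electrostatic field, rotated by a constant duality angle.

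The one point needing care is that the Heisenberg translation in Step 1 is a single global one. Its constants were chosen once, from $c_\perp$ and the constant value of $\widetilde v$, so the normalization holds on the whole half-plane rather than on a single rod. The duality rotation is also global, since $\gamma$ is a single constant. I do not expect any real obstacle beyond checking these normalizations.
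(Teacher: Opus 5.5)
Your proposal is correct and follows the paper's route. The paper states this corollary as an immediate consequence of Propositions~\ref{prop:conformastatic} and~\ref{lem:noNondegenerate}, and you chain exactly those results. Your extra check that the Heisenberg translation and the duality angle $\gamma$ are single global choices is sound bookkeeping.
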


\subsection{Proof of Theorem~\ref{thm:main}}
Let $M=\max_{1\leq j\leq N+1}\mathbf{b}_j$.  By \eqref{eq:outerDefects}, $M\geq0$.
Suppose first that no bounded axis rod realizes the logarithmic angle defect maximum, $M$.  If $M>0$, then the maximum
could not occur on either semi-infinite rod, whose values are zero, and hence
would have to occur on a bounded rod, leading to a contradiction.  Therefore
$M=0$.  Since no bounded rod realizes $M$, every bounded rod satisfies
$\mathbf{b}_j<0$.  This is alternative~\ref{alt:attractive}.

Suppose instead that some bounded axis rod realizes $M$. By Proposition
\ref{prop:maxRodRigidity} it follows that $\alpha\equiv0$.  We may then appeal to
Corollary \ref{cor:alphaZeroMP}, to obtain the conclusion of alternative~\ref{alt:MP}.


\begin{thebibliography}{99}

\bibitem{AmselEtAl2010}
A.~J. Amsel, G.~T. Horowitz, D. Marolf, and M.~M. Roberts,
\emph{Uniqueness of extremal Kerr and Kerr--Newman black holes},
Phys. Rev. D \textbf{81} (2010), 024033.

\bibitem{AxlerBourdonRamey}
S.~Axler, P.~Bourdon, and W.~Ramey,
B\^ocher's theorem,
Amer. Math. Monthly \textbf{99} (1992), no.~1, 51--55.

\bibitem{BachWeyl1922}
R. Bach and H. Weyl,
\emph{Neue L\"osungen der Einsteinschen Gravitationsgleichungen},
Math. Z. \textbf{13} (1922), 134--145.

\bibitem{BuntingMasood1987}
G.~L. Bunting and A.~K.~M. Masood-ul-Alam,
\emph{Nonexistence of multiple black holes in asymptotically Euclidean static
vacuum space-time},
Gen. Relativity Gravitation \textbf{19} (1987), 147--154.

\bibitem{Carter1971}
B. Carter,
\emph{Axisymmetric black hole has only two degrees of freedom},
Phys. Rev. Lett. \textbf{26} (1971), 331--333.

\bibitem{Carter1973}
B. Carter,
\emph{Black hole equilibrium states},
in \emph{Black Holes/Les Astres Occlus}, Les Houches 1972,
Gordon and Breach, New York, 1973, pp. 57--214.

\bibitem{Carter1985}
B. Carter,
\emph{Bunting identity and Mazur identity for non-linear elliptic systems
including the black hole equilibrium problem},
Comm. Math. Phys. \textbf{99} (1985), 563--591.

\bibitem{ChruscielRigidity}
P.~T. Chru\'sciel,
On rigidity of analytic black holes,
Comm. Math. Phys. \textbf{189} (1997), 1--7.

\bibitem{ChruscielCostaHeusler2012}
P.~T. Chru\'sciel, J.~L. Costa, and M. Heusler,
\emph{Stationary black holes: uniqueness and beyond},
Living Rev. Relativ. \textbf{15} (2012), Art. 7.

\bibitem{ChruscielNadirashvili1995}
P.~T. Chru\'sciel and N.~S. Nadirashvili,
\emph{All electrovacuum Majumdar--Papapetrou space-times with nonsingular
black holes},
Class. Quantum Grav. \textbf{12} (1995), L17--L23.

\bibitem{ChruscielNguyen2010}
P.~T. Chru\'sciel and L. Nguyen,
\emph{A uniqueness theorem for degenerate Kerr--Newman black holes},
Ann. Henri Poincar\'e \textbf{11} (2010), 585--609.

\bibitem{ChruscielReallTod2006}
P.~T. Chru\'sciel, H.~S. Reall, and P. Tod,
\emph{On Israel--Wilson--Perj\'es black holes},
Class. Quantum Grav. \textbf{23} (2006), 2519--2540.

\bibitem{ChruscielReallTodStatic2006}
P.~T. Chru\'sciel, H.~S. Reall, and P. Tod,
\emph{On non-existence of static vacuum black holes with degenerate components
of the event horizon},
Class. Quantum Grav. \textbf{23} (2006), 549--554.

\bibitem{ChruscielTod2007}
P.~T. Chru\'sciel and P. Tod,
\emph{The classification of static electro-vacuum space-times containing an
asymptotically flat spacelike hypersurface with compact interior},
Comm. Math. Phys. \textbf{271} (2007), 577--589.

\bibitem{Costa2010}
J.~L. Costa,
\emph{On the classification of stationary electro-vacuum black holes},
Class. Quantum Grav. \textbf{27} (2010), 035010.

\bibitem{Ernst1968}
F.~J. Ernst,
\emph{New formulation of the axially symmetric gravitational field problem.
II},
Phys. Rev. \textbf{168} (1968), 1415--1417.

\bibitem{HKWXasymptotics}
Q. Han, M. Khuri, G. Weinstein, and J. Xiong,
\emph{Asymptotic analysis of harmonic maps with prescribed singularities},
Ann. PDE \textbf{12} (2026), no.~2, Paper No. 21.

\bibitem{HKWXvacuum}
Q. Han, M. Khuri, G. Weinstein, and J. Xiong,
\emph{Kerr black hole uniqueness}, arXiv:2608.23325 (2026).

\bibitem{HKWXprog} Q. Han, M. Khuri, G. Weinstein, and J. Xiong, in progress.

\bibitem{HartleHawking1972}
J.~B. Hartle and S.~W. Hawking,
\emph{Solutions of the Einstein--Maxwell equations with many black holes},
Comm. Math. Phys. \textbf{26} (1972), 87--101.

\bibitem{IsraelKhan1964}
W. Israel and K.~A. Khan,
\emph{Collinear particles and Bondi dipoles in general relativity},
Nuovo Cimento \textbf{33} (1964), 331--344.

\bibitem{Hennig2019}
J. Hennig,
\emph{On the balance problem for two rotating and charged black holes},
Class. Quantum Grav. \textbf{36} (2019), 235001.

\bibitem{Hennig2020}
J. Hennig,
\emph{Axis potentials for stationary $n$-black hole configurations},
Class. Quantum Grav. \textbf{37} (2020), 19LT01.

\bibitem{Hennig2026}
J. Hennig,
\emph{The balance problem for $n$ aligned black holes},
J. Phys. Conf. Ser. \textbf{3177} (2026), 012022;
arXiv:2604.12134.

\bibitem{HirschZhang} S. Hirsch and Y. Zhang, \textit{Classification of maximally charged black holes}, arXiv:2608.24843 (2026).

\bibitem{KhuriWeinstein2016}
M. Khuri and G. Weinstein,
\emph{The positive mass theorem for multiple rotating charged black holes},
Calc. Var. Partial Differential Equations \textbf{55} (2016), no.~2,
Paper No. 33, 29 pp.

\bibitem{LiTian1992}
Y.~Y. Li and G. Tian,
\emph{Regularity of harmonic maps with prescribed singularities},
Comm. Math. Phys. \textbf{149} (1992), no.~1, 1--30.

\bibitem{LiTian1993}
Y.~Y. Li and G. Tian,
\emph{Harmonic maps with prescribed singularities},
in \emph{Differential Geometry: Partial Differential Equations on Manifolds},
Proc. Sympos. Pure Math., vol. 54, Amer. Math. Soc., Providence, RI, 1993,
pp. 317--326.

\bibitem{Lucietti}
J.~Lucietti,
All higher-dimensional Majumdar--Papapetrou black holes,
Ann. Henri Poincar\'e \textbf{22} (2021), 2437--2450.

\bibitem{Majumdar1947}
S.~D. Majumdar,
\emph{A class of exact solutions of Einstein's field equations},
Phys. Rev. \textbf{72} (1947), 390--398.

\bibitem{Mazur1982}
P.~O. Mazur,
\emph{Proof of uniqueness of the Kerr--Newman black hole solution},
J. Phys. A \textbf{15} (1982), 3173--3180.

\bibitem{Meinel2012}
R. Meinel,
\emph{Constructive proof of the Kerr--Newman black hole uniqueness including
the extreme case},
Class. Quantum Grav. \textbf{29} (2012), 035004.


\bibitem{Nguyen2011}
L. Nguyen,
\emph{Singular harmonic maps and applications to general relativity},
Comm. Math. Phys. \textbf{301} (2011), no.~2, 411--441.

\bibitem{Papapetrou1947}
A. Papapetrou,
\emph{A static solution of the equations of the gravitational field for an
arbitrary charge distribution},
Proc. Roy. Irish Acad. Sect. A \textbf{51} (1947), 191--204.

\bibitem{Robinson1975}
D.~C. Robinson,
\emph{Uniqueness of the Kerr black hole},
Phys. Rev. Lett. \textbf{34} (1975), 905--906.

\bibitem{SudarskyWald}
D.~Sudarsky and R.~M.~Wald,
\emph{Mass formulas for stationary Einstein--Yang--Mills black holes and a simple proof of two staticity theorems},
Phys.\ Rev.\ D \textbf{47} (1993), R5209--R5213.


\bibitem{Weinstein1990}
G. Weinstein,
\emph{On rotating black holes in equilibrium in general relativity},
Comm. Pure Appl. Math. \textbf{43} (1990), no.~7, 903--948.

\bibitem{Weinstein1992}
G. Weinstein,
\emph{The stationary axisymmetric two-body problem in general relativity},
Comm. Pure Appl. Math. \textbf{45} (1992), no.~9, 1183--1203.


\bibitem{Weinstein1995}
G. Weinstein,
\emph{On the Dirichlet problem for harmonic maps with prescribed
singularities},
Duke Math. J. \textbf{77} (1995), no.~1, 135--165.

\bibitem{Weinstein1996}
G. Weinstein,
\emph{$N$-black hole stationary and axially symmetric solutions of the
Einstein/Maxwell equations},
Comm. Partial Differential Equations \textbf{21} (1996), no.~9--10,
1389--1430.

\bibitem{Weinstein} G. Weinstein, \emph{Harmonic maps with prescribed singularities and applications in general relativity}, in {\it The role of metrics in the theory of partial differential equations}, Adv. Stud. Pure Math., \textbf{85} (2020), 479--489.

\end{thebibliography}
\end{document}